%
\title{Discrete nonlinear hyperbolic equations. Classification of
integrable cases}

\author{Vsevolod~E.~Adler\thanks{Landau Institute for Theoretical
Physics,
 1A pr. Ak. Semenova, 142432 Chernogolovka, Russia.
 \hbox{E--Mail}: {\tt adler}@{\tt itp.ac.ru}. Partially supported by the
 DFG Research Unit 565 ``Polyhedral Surfaces'' and by the RFFR grant
\# 04-01-00403}\and
 Alexander~I.~Bobenko\thanks{Institut f\"ur Mathematik, Technische
Universit\"at Berlin,
 Strasse des 17. Juni 136, 10623 Berlin, Germany.
 E--Mail: {\tt bobenko}@{\tt math.tu-berlin.de}.
 Partially supported by the DFG Research Unit 565 ``Polyhedral Surfaces''}\and
 Yuri~B.~Suris\thanks{Zentrum Mathematik, Technische Universit\"at
M\"unchen,
 Boltzmannstrasse 3, 85747 Garching, Germany.
 E--Mail: {\tt suris}@{\tt ma.tum.de}.
 Partially supported by the ESF Scientific Programme
 ``Methods of Integrable Systems, Geometry, Applied Mathematics''
(MISGAM)}}

\date{\today}
\documentclass[a4paper,12pt]{article}
\usepackage{amsmath,amssymb,amsthm,epic,eepicemu}
\usepackage[english]{babel}
\usepackage{hyperref}

\advance\textwidth 24mm  \advance\oddsidemargin -12mm
\advance\textheight 25mm \advance\topmargin -15mm

\def\Integer{\mathbb{Z}}
\def\CP{\mathbb{CP}}
\def\cP{\mathcal{P}}
\def\const{\mathop{\rm const}}
\def\sn{\mathop{\rm sn}\nolimits}
\def\a{\alpha}
\def\b{\beta}
\def\g{\gamma}
\def\d{\delta} \def\D{\Delta}
\def\eps{\varepsilon}
\def\ka{\varkappa}
\def\la{\lambda}

\def\ox#1{\overset{#1}{x}}
\def\A#1{\a^{(#1)}}

\theoremstyle{plain}
  \newtheorem{theorem}{Theorem}
  \newtheorem{lemma}{Lemma}
\theoremstyle{definition}
  \newtheorem{definition}{Definition}
\theoremstyle{remark}
  \newtheorem{example}{Example}

\begin{document}
\maketitle \thispagestyle{empty}

\begin{abstract}
We consider discrete nonlinear hyperbolic equations on
quad-graphs, in particular on $\Integer^2$. The fields are
associated to the vertices and an equation $Q(x_1,x_2,x_3,x_4)=0$
relates four fields at one quad. Integrability of equations is
understood as 3D-consistency. The latter is a possibility to
consistently impose equations of the same type on all the faces of
a three-dimensional cube. This allows to set these equations also
on multidimensional lattices $\Integer^N$. We classify integrable
equations with complex fields $x$, and $Q$ affine-linear with
respect to all arguments. The method is based on analysis of
singular solutions.
\end{abstract}


\section{Introduction}
The idea of consistency (or compatibility) is at the core of the
theory of integrable systems. It appears already in the very
definition of complete integrability of a Hamiltonian flow in the
Liouville--Arnold sense, which says that the flow may be included
into a complete family of commuting (compatible) Hamiltonian flows
\cite{A}. Similarly, it is a characteristic feature of soliton
(integrable) partial differential equations that they appear not
separately but are always organized in hierarchies of commuting
(compatible) systems. The condition of the existence of a number of
commuting systems may be taken as the basis of the {\em symmetry
approach} which is used to single out integrable systems in some
general classes and to classify them \cite{MSY}. Another way of
relating continuous and discrete systems, connected with the idea of
compatibility, is based on the notion of {\em B\"acklund
transformations} and the {\em Bianchi permutability theorem}
\cite{Bi}. The latter developed into one of the fundamental
principles of discrete differential geometry
\cite{Bobenko_Suris_2005}.

So, the consistency of discrete equations takes center stage in the
integrability theater. We say that
\begin{quote}
a $d$--dimensional discrete equation possesses the {\em consistency}
 property, if it may be imposed in a consistent way on all
$d$--dimensional sublattices of a $(d+1)$--dimensional lattice
\end{quote}
(a more precise definition will be formulated below). As the above
remarks show, the idea that this notion is closely related to
integrability, is not new. In the case $d=1$ it was used as a
possible definition of integrability of maps in \cite{V}. For $d=2$
a decisive step was made in \cite{Bobenko_Suris_2002a} and
independently in \cite{N}: it was shown that the integrability in
the usual sense of soliton theory (as existence of a zero curvature
representation) {\em follows} for two-dimensional systems from the
three-dimensional consistency. So the latter property may be
considered as a definition of integrability. It is a criterion which
may be checked in a completely algorithmic manner starting with no
more information than the equation itself. Moreover, in if this
criterion gives a positive result, it delivers also the discrete
zero curvature representation.

Basic building blocks of systems on quad-graphs are quad-equations,
which are equations on quadrilaterals
\begin{equation}\label{basic eq}
Q(x_1,x_2,x_3,x_4)=0,
\end{equation}
where the field variables $x_i\in\CP^1$ assigned to the four
vertices of the quadrilateral as shown in Figure
\ref{Fig:quadrilateral}. On $\Integer^2$ equations of this type
can be treated as discrete analogues of nonlinear hyperbolic
equations. Boundary value problems of Goursat type for such
systems were studied in \cite{AdlerVeselov}.

{\em Assumption.} In this paper we assume that $Q$ is {\em
affine-linear}, i.e. a polynomial of degree one in each argument.
This implies that equation (\ref{basic eq}) can be solved for any
variable and the solution is a rational function of other three
variables.
\begin{figure}[htbp]
\begin{center}
\setlength{\unitlength}{0.06em}
\begin{picture}(200,140)(-50,-20)
  \put( 0,  0){\line(1,0){100}}
  \put( 0,100){\line(1,0){100}}
  \put(  0, 0){\line(0,1){100}}
  \put(100, 0){\line(0,1){100}}
  \put(-10,-13){$x_1$}
  \put(99,-13){$x_2$}
  \put(99,110){$x_3$}
  \put(-10,110){$x_4$}
   \put(50,50){$Q$}
\end{picture}
\caption{A quad-equation $Q(x_1,x_2,x_3,x_4)=0$; the variables $x_i$
are assigned to vertices} \label{Fig:quadrilateral}
\end{center}
\end{figure}
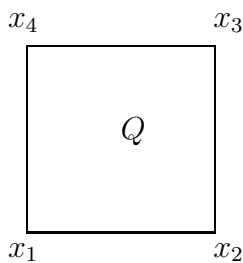

The general idea of integrability as consistency in this case is
shown in Figure \ref{f.cube_0}.
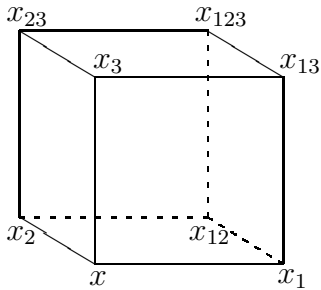
\begin{figure}
\setlength{\unitlength}{0.06em}
\begin{center}
\begin{picture}(100,180)
 \path(0,0)(100,0)(100,100)(0,100)(0,0)(-40,25)(-40,125)(60,125)(100,
100)
 \path(-40,125)(0,100)
 \dashline{3}(-40,25)(60,25)(100,0)\dashline{3}(60,25)(60,125)
 \put(-47,130){$x_{23}$}\put(53,130){$x_{123}$}
 \put(-1,105){$x_3$}\put(98,105){$x_{13}$}
 \put(-47,13){$x_2$}\put(50,13){$x_{12}$}
 \put(-3,-11){$x$}\put(97,-10){$x_1$}
\end{picture}
\end{center}
\caption{A 3D consistent system of quad-equations. The equations are
associated to faces of the cube.} \label{f.cube_0}
\end{figure}
We put six quad-equations on the faces of a coordinate cube. The
subscript $j$ corresponds to the shift in the $j$-th coordinate
direction. If one starts with arbitrary values $x,x_1,x_2,x_3$ then
the values $x_{12},x_{13},x_{23}$ are found from three equations on
the left, front and bottom faces and the equations on the right,
back and top faces yield, in general, three different values of
$x_{123}$. The system is called 3D-consistent, if these three values
are identical for arbitrary initial data $x,x_1,x_2,x_3$.

In \cite{Adler_Bobenko_Suris_2003} we classified 3D-consistent
systems of a particular type. The equations on all faces coincided
up to the values of parameters associated with three types of the
edges. Moreover, cubic symmetry was imposed as well as a certain
additional condition called the tetrahedron property. A
3D-consistent system without the tetrahedron property was found in
\cite{Hietarinta}. Later, this system was shown to be linearizable
in \cite{Ramani_etal}. In \cite{Viallet} it was shown that the
3D-consistent equations classified in
\cite{Adler_Bobenko_Suris_2003} satisfy the integrability test
based on the notion of algebraic entropy.

The consistency approach was generalized in various directions.
Systems with the fields on edges lead to Yang-Baxter maps
\cite{Veselov_Kyoto, SV, PTV}. Quadrirational Yang-Baxter maps were
classified in \cite{Adler_Bobenko_Suris_2004}. The 4D-consistency of
discrete 3D-systems is related to the functional tetrahedron
equation studied in \cite{MN, Ko, KKS, BazhSerg}.

In the present paper we classify 3D-consistent affine-linear
quad-equations in a more general setup. The faces of the consistency
cube can carry a priori different quad-equations. Neither symmetry
nor the tetrahedron property are assumed. This leads to a general
classification of integrable quad-equations.

The outline of our approach is the following.

a) By applying discriminant-like operators to successively
eliminate variables one can descend from an affine-linear
polynomial of four variables, associated to a quadrilateral, to
quadratic polynomials of two variables, associated to its edges,
and finally to quartic polynomials of one variable, associated to
its vertices (Section \ref{s:rhq}).

b) By analysis of singular solutions, we prove that the biquadratic
polynomials which come to an edge of the cube from two adjacent
faces coincide up to a constant factor (see Section \ref{s:sing}).
At this point an additional non-degeneracy assumption is needed. We
assume that all the biquadratic polynomials do not have factors of
the form $x-c$ with constant $c$. (Examples of equations without
this property are presented in Section \ref{s:H}).

c) This allows us to associate to each vertex of the cube a
quartic polynomial in the respective variable; the admissible sets
of polynomials are classified modulo M\"obius transformations,
each variable is transformed independently (Section \ref{s:rrh}).

d) Finally we reverse the procedure and reconstruct the biquadratic
polynomials on the edges of the cube and the affine-linear equations
themselves (Section \ref{s:Q}).


\section{Affine-linear and biquadratic polynomials}\label{s:rhq}

Our approach is based on the descent from the faces to the edges and
from the edges to the vertices of the cube. In this Section we
consider a single face and describe this descent irrespective of
3D-consistency. Let $\cP^m_n$ denote the set of polynomials in $n$
variables which are of degree $m$ in each variable. We consider the
following action of M\"obius transformations on polynomials $f\in
\cP^m_n$:
\[
 M[f](x_1,\dots,x_n)=(c_1x_1+d_1)^m\dots(c_nx_n+d_n)^m
 f\Bigl(\frac{a_1x_1+b_1}{c_1x_1+d_1},\dots,\frac{a_nx_n+b_n}{c_nx_n+d_n}\Bigr),
\]
where $a_id_i-b_ic_i=\D_i\ne0$. The operations
\[
 \cP^1_4\stackrel{\d_{x_i,x_j}}{\longrightarrow}
 \cP^2_2\stackrel{\d_{x_k}}{\longrightarrow} \cP^4_1,
 \qquad \d_{x,y}(Q)=Q_xQ_y-QQ_{xy},\quad \d_x(h)=h^2_x-2hh_{xx}
\]
are covariant with respect to M\"obius transformations. (The
subscripts denote partial differentiation). More precisely: if
$Q\in\cP_4^1$, $h\in\cP_2^2$, then
\begin{equation}\label{dMob}
 \d_{x_i,x_j}(M[Q])=\D_i\D_jM[\d_{x_i,x_j}(Q)],\quad
 \d_{x_i}(M[h])=\D_i^2M[\d_{x_i}(h)].
\end{equation}

Further on we will make an extensive use of {\em relative
invariants} of polynomials under M\"obius transformations. For
quartic polynomials $r\in\cP^4_1$ these relative invariants are
well known and can be defined as the coefficients of the
Weierstrass normal form $r=4x^3-g_2x-g_3$. For a given polynomial
$r(x)=r_4x^4+r_3x^3+r_2x^2+r_1x+r_0$ they are given by (see e.g.
\cite{WW})
\begin{align*}
 g_2(r,x)&= \frac1{48}(2rr^{IV}-2r'r'''+(r'')^2)
     = \frac1{12}(12r_0r_4-3r_1r_3+r^2_2),\\
 g_3(r,x)&= \frac1{3456}(12rr''r^{IV}-9(r')^2r^{IV}-6r(r''')^2+6r'r''r'''-
            2(r'')^3)\\
    &= \frac1{432}(72r_0r_2r_4-27r^2_1r_4+9r_1r_2r_3-27r_0r2_3-2r^3_2).
\end{align*}
Under the M\"obius change of $x=x_1$ these quantities are just
multiplied by the constant factors:
\[
 g_k(M[r],x)=\D_1^{2k}g_k(r,x),\quad k=2,3.
\]
For biquadratic polynomials $h\in \cP^2_2$,
\begin{equation}\label{h}
 h(x,y)=h_{22}x^2y^2+h_{21}x^2y+h_{20}x^2
       +h_{12}xy^2+h_{11}xy+h_{10}x+h_{02}y^2+h_{01}y+h_{00},
\end{equation}
the relative invariants are
\begin{align*}
 i_2(h,x,y)&= 2hh_{xxyy}-2h_xh_{xyy}-2h_yh_{xxy}+2h_{xx}h_{yy}+h^2_{xy}=\\
       &= 8h_{00}h_{22}-4h_{01}h_{21}-4h_{10}h_{12}+8h_{02}h_{20}+h^2_{11},
\\
 i_3(h,x,y)&= \frac14\det
  \begin{pmatrix}
    h & h_x & h_{xx}\\
    h_y & h_{xy} & h_{xxy}\\
    h_{yy} & h_{xyy} & h_{xxyy}
  \end{pmatrix}=\det
  \begin{pmatrix}
    h_{22} & h_{21} & h_{20}\\
    h_{12} & h_{11} & h_{10}\\
    h_{02} & h_{01} & h_{00}
  \end{pmatrix}.
\end{align*}
Notice that $i_3$ can be defined as well by the formula
\[
 -4i_3(h,x,y)=\d_{x,y}(\d_{x,y}(h))/h.
\]
Under the M\"obius change of $x=x_1$ and $y=x_2$,
\[
 i_k(M[h],x,y)=\D_1^k\D_2^ki_k(h,x,y),\quad k=2,3.
\]
The following properties of the operations $\d_{x,y}$, $\d_x$ are
proved straightforwardly.

\begin{lemma}\label{l:dd}
For any affine-linear polynomial $Q(x_1,x_2,x_3,x_4)\in\cP^1_4$
there holds:
\begin{gather}
\label{ddQ}
  \d_{x_3}(\d_{x_1,x_2}(Q))=\d_{x_2}(\d_{x_1,x_3}(Q)), \\
\label{idQ}
  i_k(\d_{x_1,x_2}(Q),x_3,x_4)=i_k(\d_{x_3,x_4}(Q),x_1,x_2),\quad k=2,3.
\end{gather}
For any biquadratic polynomial $h(x_1,x_2)\in\cP^2_2$ there holds:
\begin{equation}\label{gdh}
g_k(\d_{x_1}(h),x_2)=g_k(\d_{x_2}(h),x_1),\quad k=2,3.
\end{equation}
\end{lemma}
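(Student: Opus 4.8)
The plan is to treat \eqref{ddQ}, \eqref{idQ} and \eqref{gdh} as polynomial identities in the coefficients of $Q$ (respectively $h$) and to organize the verification around the M\"obius covariance \eqref{dMob} together with the explicit elimination formulas for the two operations. If $Q=ax_ix_j+bx_i+cx_j+d$ with $a,b,c,d$ affine-linear in the two remaining variables, then $\d_{x_i,x_j}(Q)=bc-ad$, which up to sign is the determinant of $Q$ regarded as a bilinear form in $(x_i,x_j)$; and if $h=px^2+qx+r$ with $p,q,r$ quadratic in the other variable, then $\d_x(h)=q^2-4pr$, the discriminant of $h$ in $x$. In this form every descent in the lemma is an iterated determinant/discriminant, and this is how I would run the computations.

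For \eqref{ddQ} I would show that both sides compute one and the same object. Freezing $x_4$ and viewing $Q$ as a multilinear form in $x_1,x_2,x_3$ whose $2\times2\times2$ coefficient tensor $T$ depends affine-linearly on $x_4$, the composition $\d_{x_3}\circ\d_{x_1,x_2}$ is precisely Schl\"afli's iterated-discriminant expression for Cayley's $2\times2\times2$ hyperdeterminant $\mathrm{Det}(T)$: indeed $\d_{x_1,x_2}(Q)=-\det M$ for the $2\times2$ matrix $M$ of $Q$ as a bilinear form in $(x_1,x_2)$ (entries linear in $x_3$, with $x_4$ frozen), and $\d_{x_3}$ sends it to the discriminant of $\det M$ in $x_3$. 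Since $\mathrm{Det}(T)$ is symmetric under permutations of its three tensor slots, the value is independent of the order in which $x_1,x_2,x_3$ are eliminated, which is exactly \eqref{ddQ} (equivalently, the quartic attached to the surviving vertex is well defined). The same conclusion also falls out of a direct expansion after normalizing $Q$ by \eqref{dMob}.

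For \eqref{idQ} and \eqref{gdh} I would use covariance to cut down the computation. Combining \eqref{dMob} with the homogeneities $i_k(\la h)=\la^k i_k(h)$, $g_k(\la r)=\la^k g_k(r)$ and the quoted transformation laws for $i_k,g_k$, one finds that each side of \eqref{idQ} is a relative invariant of $Q$ of the totally symmetric multiweight $\D_1^k\D_2^k\D_3^k\D_4^k$, and each side of \eqref{gdh} a relative invariant of $h$ of the balanced weight $\D_1^{2k}\D_2^{2k}$. Thus \eqref{idQ} asserts invariance under the exchange $\{x_1,x_2\}\leftrightarrow\{x_3,x_4\}$ and \eqref{gdh} invariance under $x_1\leftrightarrow x_2$, and in each case the exchange commutes with the M\"obius action. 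The whole assertion is therefore M\"obius-equivariant, so it suffices to test it on a slice transverse to the generic $\mathrm{SL}_2^4$- (respectively $\mathrm{SL}_2^2$-) orbits; such a slice has only a few parameters, reducing each identity to a finite check.

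The step I expect to be the real obstacle is exactly this final check, because --- unlike for \eqref{ddQ} --- covariance does not by itself force \eqref{idQ} and \eqref{gdh}: the pertinent spaces of relative invariants are more than one-dimensional. Already for biquadratics the balanced weight $(4,4)$ carries two independent invariants, $i_2(h)^2$ and $g_2(\d_x h,y)$ (a single pair of examples shows these are not proportional), so the symmetry asserted in \eqref{gdh} is genuine content rather than a formal consequence of the weight. Hence the normalized coefficient computation cannot be bypassed; it is lengthy but routine, and covariance guarantees that its outcome does not depend on the chosen normal form.
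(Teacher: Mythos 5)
Your proposal is correct, and it takes a genuinely different route from the paper, whose entire ``proof'' of Lemma~\ref{l:dd} is the one-line remark that these identities are proved straightforwardly, i.e.\ by direct expansion in the coefficients. Your structural observations are exactly right: for $Q=ax_ix_j+bx_i+cx_j+d$ one has $\d_{x_i,x_j}(Q)=bc-ad$, and for $h=px^2+qx+r$ one has $\d_x(h)=q^2-4pr$; consequently, with $x_4$ frozen, $\d_{x_3}(\d_{x_1,x_2}(Q))$ is Schl\"afli's iterated-discriminant computation of Cayley's $2\times2\times2$ hyperdeterminant of the coefficient tensor, and the slot-permutation symmetry of the hyperdeterminant yields \eqref{ddQ} conceptually, replacing the brute-force expansion and explaining \emph{why} the quartic at the surviving vertex is well defined. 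Your weight bookkeeping for \eqref{idQ} and \eqref{gdh} is also correct (both sides of \eqref{idQ} carry multiweight $\D_1^k\D_2^k\D_3^k\D_4^k$, both sides of \eqref{gdh} weight $\D_1^{2k}\D_2^{2k}$), and since the difference of the two sides is a polynomial in the coefficients of $Q$ resp.\ $h$, vanishing on normal forms along generic orbits extends to all of $\cP^1_4$ resp.\ $\cP^2_2$ by Zariski density --- a step you use implicitly and should state, since the degenerate orbits are not covered by the slice itself. Your caveat is well taken and is the honest core of the matter: covariance alone cannot force \eqref{idQ} or \eqref{gdh}, and your independence claim checks out concretely --- for $h=x_1x_2$ one gets $i_2(h)^2=1$ and $g_2(\d_{x_1}h,x_2)=1/12$, while for $h=x_1^2x_2^2+1$ one gets $64$ and $4/3$, so the two weight-$(4,4)$ invariants are not proportional and the asserted symmetry is genuine content. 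The one thing you defer is the residual finite check on the slice (``lengthy but routine''); since covariance reduces it from identities in nine or sixteen coefficients to a computation in two or three parameters, this deferral is acceptable and in any case no less complete than the paper's own treatment. In sum: the paper's approach buys brevity, yours buys an explanation, and the hyperdeterminant reading of \eqref{ddQ} in particular illuminates the commutativity of the descent diagram that the paper simply asserts.
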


\def\xl#1{\xleftarrow{\displaystyle#1}}
\def\xr#1{\xrightarrow{\displaystyle#1}}
\def\bu{\bigg\uparrow}
\def\bd{\bigg\downarrow}
Denote $h^{ij}=h^{ji}=\d_{x_k,x_l}(Q)$ where
$\{i,j,k,l\}=\{1,2,3,4\}$. Then Lemma \ref{l:dd} implies the
commutativity of the diagram\medskip

\begin{equation}\label{diagram}
 \begin{array}{ccccc}
 r_4(x_4) & \xl{~\d_{x_3}~} & h^{34}(x_3,x_4) & \xr{~\d_{x_4}~} &
r_3(x_3)\\[1em]
 \d_{x_1}\bu & &~~~~~\bu\d_{x_1,x_2} & &\bu\d_{x_2}\\[1em]
 h^{14}(x_1,x_4) & \xl{\d_{x_2,x_3}} & Q(x_1,x_2,x_3,x_4)
          & \xr{\d_{x_1,x_4}} & h^{23}(x_2,x_3)\\[1em]
 \d_{x_4}\bd & &~~~~~\bd\d_{x_3,x_4} & &\bd\d_{x_3}\\[1em]
 r_1(x_1) & \xl{~\d_{x_2}~} & h^{12}(x_1,x_2)
          & \xr{~\d_{x_1}~} & r_2(x_2)\\[1em]
 \end{array}
\end{equation}
Moreover, biquadratic polynomials on the opposite edges have the
same invariants $i_2,i_3$, and all four quartic polynomials $r_i$
have the same invariants $g_2,g_3$. This diagram can be completed by
the polynomials $h^{13},h^{24}$ corresponding to the diagonals (so
that the graph of the tetrahedron appears), but we will not need
them. The introduced polynomials satisfy also a number of other
interesting relations.

\begin{lemma}\label{l:QQ}
For any affine-linear polynomial $Q(x_1,x_2,x_3,x_4)\in\cP_4^1$
and with the notations
$h^{ij}(x_i,x_j)=\d_{x_k,x_l}(Q)\in\cP_2^2$, the following
identities hold:
\begin{equation}\label{Q14}
 4i_3(h^{12},x_1,x_2)h^{14}=\det\left(\begin{array}{lll}
   h^{12}          & h^{12}_{x_1}       & \ell \\
   h^{12}_{x_2}    & h^{12}_{x_1x_2}    & \ell_{x_2} \\
   h^{12}_{x_2x_2} & h^{12}_{x_1x_2x_2} & \ell_{x_2x_2}
  \end{array}\right),
\end{equation}
where
$\ell=h^{23}_{x_3x_3}h^{34}-h^{23}_{x_3}h^{34}_{x_3}+h^{23}h^{34}_{x_3x_3}$;

\begin{equation}\label{PQ}
  h^{12}h^{34}-h^{14}h^{23}=PQ,\quad
  P=\det\left(\begin{array}{lll}
   Q       & Q_{x_1}    & Q_{x_3} \\
   Q_{x_2} & Q_{x_1x_2} & Q_{x_2x_3} \\
   Q_{x_4} & Q_{x_1x_4} & Q_{x_3x_4}
  \end{array}\right) \in \cP^1_4\,;
\end{equation}
\begin{equation}\label{Qx}
  \frac{2Q_{x_1}}{Q}=\frac{h^{12}_{x_1}h^{34}-h^{14}_{x_1}h^{23}+
  h^{23}h^{34}_{x_3}-h^{23}_{x_3}h^{34}}{h^{12}h^{34}-h^{14}h^{23}}\,.
\end{equation}
\end{lemma}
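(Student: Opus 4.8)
The plan is to substitute everywhere the explicit forms $h^{ij}=\d_{x_k,x_l}(Q)=Q_{x_k}Q_{x_l}-QQ_{x_k x_l}$ and to treat the three identities as polynomial identities in the $16$ coefficients of $Q$, exploiting divisibility and Wronskian structure rather than blind expansion. The main lever is that $h^{12},h^{34},h^{23},h^{14}$ each depend on only two of the four variables (on $(x_1,x_2)$, $(x_3,x_4)$, $(x_2,x_3)$, $(x_1,x_4)$ respectively), so each is \emph{independent} of the other two. Identity \eqref{PQ} is the keystone and I would prove it first; then \eqref{Qx} follows from it by differentiation, while \eqref{Q14} is handled separately.

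For \eqref{PQ} I would expand the products $h^{12}h^{34}$ and $h^{14}h^{23}$ directly. In each product the term $Q_{x_1}Q_{x_2}Q_{x_3}Q_{x_4}$ occurs, so it cancels in the difference; what remains is manifestly divisible by $Q$. (Conceptually this divisibility is clear from the fact that on the hypersurface $\{Q=0\}$ one has $h^{12}h^{34}=Q_{x_1}Q_{x_2}Q_{x_3}Q_{x_4}=h^{14}h^{23}$.) Regrouping the cofactor of $Q$ — the six cubic monomials in the first derivatives together with the two $Q$-linear terms — one recognizes exactly the Laplace expansion of the determinant $P$ along its first column, so the quotient is $P$ with no ambiguity. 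The same structure shows $P\in\cP^1_4$: the second column of $P$ equals $\partial_{x_1}$ of the first (and the third equals $\partial_{x_3}$ of the first, etc.), so the coefficient of $x_i^2$ in $P$ is a determinant with two coincident columns or rows, hence vanishes.

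Given \eqref{PQ}, the denominator in \eqref{Qx} equals $PQ$, so the claim reduces to $2Q_{x_1}P=N$, where $N$ is its numerator. Differentiating \eqref{PQ} in $x_1$ and using $h^{34}_{x_1}=h^{23}_{x_1}=0$ yields $h^{12}_{x_1}h^{34}-h^{14}_{x_1}h^{23}=P_{x_1}Q+PQ_{x_1}$, which turns $2Q_{x_1}P=N$ into the auxiliary identity
$$PQ_{x_1}-P_{x_1}Q=h^{23}h^{34}_{x_3}-h^{23}_{x_3}h^{34}.$$
Both sides are independent of $x_1$ (the left side is the $x_1$-Wronskian of two polynomials of degree one in $x_1$; the right side contains only $h^{23},h^{34}$). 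A short computation shows their difference vanishes on $\{Q=0\}$ and is therefore divisible by $Q$; but a polynomial independent of $x_1$ that is divisible by $Q$, which has degree one in $x_1$ for generic $Q$, must vanish identically, and the identity extends to all $Q$ by polynomiality in the coefficients.

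Identity \eqref{Q14} is the real obstacle and I expect it to require genuine computation. The first step is to check that $\ell=h^{23}_{x_3x_3}h^{34}-h^{23}_{x_3}h^{34}_{x_3}+h^{23}h^{34}_{x_3x_3}$ is the classical joint (apolar) invariant of $h^{23}$ and $h^{34}$ viewed as quadratics in $x_3$, hence is \emph{independent of $x_3$} and is a biquadratic in $(x_2,x_4)$. The second step is to note that the $3\times3$ determinant is the Wronskian in $x_2$ of the three functions $h^{12},\,h^{12}_{x_1},\,\ell$, each of degree two in $x_2$; the Wronskian of three quadratics is independent of $x_2$ and collapses to a $3\times3$ determinant of their $x_2$-coefficients, a biquadratic in $(x_1,x_4)$. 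Thus both sides of \eqref{Q14} are biquadratics in $(x_1,x_4)$, matching $h^{14}$. What remains — confirming that the scalar factor is exactly $4i_3(h^{12})$ and that the resulting biquadratic is precisely $h^{14}$ — is the hard, and only unavoidable, part: for generic $Q$ one has $i_3(h^{12})\neq0$, so I would normalize $h^{12}$ to a M\"obius normal form using the covariance \eqref{dMob} and match there, the identity then holding for all $Q$ (including the degenerate case $i_3(h^{12})=0$, where both sides vanish) by polynomiality.
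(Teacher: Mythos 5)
Your proposal is sound, and the comparison here is somewhat one-sided: the paper gives \emph{no} proof of Lemma \ref{l:QQ} at all --- the identities are simply stated, with the implicit understanding that they can be verified by direct computation. So your structured derivation is necessarily a different (and more informative) route. Your arguments for \eqref{PQ} and \eqref{Qx} are complete and correct. In the difference $h^{12}h^{34}-h^{14}h^{23}$ the quartic term $Q_{x_1}Q_{x_2}Q_{x_3}Q_{x_4}$ indeed cancels, and the cofactor of $Q$ is exactly the Laplace expansion of the determinant $P$ (along the first column or the first row --- both work); the bound $P\in\cP^1_4$ holds because the second and third columns are $\partial_{x_1}$ and $\partial_{x_3}$ of the first, and rows two and three are $\partial_{x_2}$ and $\partial_{x_4}$ of row one, so each $x_i^2$-coefficient is a determinant with two equal columns or rows. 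Your reduction of \eqref{Qx} to the auxiliary identity $PQ_{x_1}-P_{x_1}Q=h^{23}h^{34}_{x_3}-h^{23}_{x_3}h^{34}$ is right and closes cleanly: both sides are independent of $x_1$, and replacing each $h^{ij}$ by $Q_{x_k}Q_{x_l}$ modulo $Q$ shows both sides are congruent to $Q_{x_1}P \pmod Q$; a polynomial of degree zero in $x_1$ divisible by $Q$, with $\deg_{x_1}Q=1$ in the generic case, vanishes, and polynomiality in the coefficients of $Q$ removes the genericity assumption. (Note that this mod-$Q$ reduction is a literal polynomial computation, so you do not even need to invoke the Nullstellensatz for ``vanishes on $\{Q=0\}$ implies divisible by $Q$''.)

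Two caveats concerning \eqref{Q14}. First, a small inaccuracy: your claim that the collapsed Wronskian determinant is automatically biquadratic in $(x_1,x_4)$ does not follow from what you wrote --- the first column has degree $2$ in $x_1$ and the second degree $1$, so the naive bound is \emph{cubic} in $x_1$. The cubic term does vanish, but for a reason you should record: writing $h^{12}=x_1^2u(x_2)+x_1v(x_2)+w(x_2)$, the $x_1^2$-part of the first column is $(u,u_{x_2},u_{x_2x_2})^T$ while the $x_1$-part of the second column is $2(u,u_{x_2},u_{x_2x_2})^T$; these are proportional, so the $x_1^3$-coefficient is a degenerate determinant. Second, the final matching of the constant $4i_3(h^{12})$ and of the biquadratic $h^{14}$ is deferred rather than done. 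Your normalization-plus-Zariski-density strategy is legitimate (both sides transform with equal weights under M\"obius changes, by \eqref{dMob} and the scaling laws of $i_2,i_3$, so it suffices to verify on a normal-form slice), and the constant is at least structurally forced: by the paper's formula $4i_3(h^{12})$ is precisely the same Wronskian determinant with $\ell$ replaced by $h^{12}_{x_1x_1}$, so \eqref{Q14} asserts that the linear map $g\mapsto\det$ sends $h^{12}_{x_1x_1}\mapsto 4i_3\cdot 1$ and $\ell\mapsto 4i_3\,h^{14}$. But be aware that normalizing $h^{12}$ uses only the M\"obius freedom in $x_1,x_2$ and leaves many free coefficients in $Q$, so the residual verification, while finite, is still a genuine multi-parameter computation --- essentially the brute-force check the authors performed silently. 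As a proof plan this is acceptable; as a finished proof, \eqref{Q14} is the one identity you have framed but not established.
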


Identity (\ref{Q14}) shows that $h^{14}$ can be expressed through
three other biquadratic polynomials (provided $i_3(h^{12})\ne0$).
Identity (\ref{PQ}) defines $Q$ as one of the factors in a simple
expression built from $h^{ij}$. Finally, differentiating
(\ref{Qx}) with respect to $x_2$ or $x_4$ leads to a relation of
the form $Q^2=F[h^{12},h^{23},h^{34},h^{14}]$, where $F$ is a
rational expression on $h^{ij}$ and their derivatives. Therefore,
if the biquadratic polynomials on three edges (out of four) is
known, then $Q$ can be found explicitly. Of course it is seen from
Lemma \ref{l:QQ} that not any set of three biquadratic polynomials
comes as $h^{ij}$ from some $Q\in\cP_4^1$. \medskip

Biquadratic polynomials $h^{ij}$ for a given $Q\in\cP_4^1$ are
closely related to singular solutions of the affine-linear
equation
\begin{equation}\label{Q}
 Q(x_1,x_2,x_3,x_4)=0.
\end{equation}
The polynomial $Q\in \cP^1_4$ is assumed irreducible (in
particular, $Q_{x_i}\not\equiv0$: otherwise the polynomial $Q$
should be considered as reducible, since under the change of the
variable $x_i\mapsto 1/x_i$ it turns into $x_iQ$). Obviously,
equation (\ref{Q}) can be solved with respect to any variable: let
$Q=p(x_j,x_k,x_l)x_i+q(x_j,x_k,x_l)$ then $x_i=-q/p$ for the
generic values of $x_j,x_k,x_l$. However, $x_i$ is not determined
if the point $(x_j,x_k,x_l)$ lies on the curve in $(\CP^1)^3$
\begin{equation}\label{pq}
 S_i:\quad p(x_j,x_k,x_l)=q(x_j,x_k,x_l)=0,\quad Q\equiv px_i+q.
\end{equation}
The projection of this curve onto the coordinate plane $(j,k)$ is
exactly the biquadratic $h^{jk}=pq_{x_l}-p_{x_l}q=0$.

\begin{definition}
A solution $(x_1,x_2,x_3,x_4)$ of equation (\ref{Q}) is called {\em
singular} with respect to $x_i\,$, if it satisfies also the equation
$Q_{x_i}(x_1,x_2,x_3,x_4)=0$. The curve $S_i$ is called the {\em
singular curve} for $x_i$.
\end{definition}

\begin{lemma}\label{l:sing}
If a solution $(x_1,x_2,x_3,x_4)$ of equation (\ref{Q}) is singular
with respect to $x_i$, then $h^{jk}=h^{jl}=h^{kl}=0$. Conversely, if
$h^{jk}=0$ for some solution, then this solution is singular with
respect to either $x_i$ or $x_l$.
\end{lemma}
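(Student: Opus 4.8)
The plan is to reduce both implications to one elementary observation about the operator $\d$ restricted to the solution variety. With $\{i,j,k,l\}=\{1,2,3,4\}$, the three biquadratics appearing in the statement are, by the very definition of $h^{ij}=\d_{x_k,x_l}(Q)$ and of $\d_{x,y}(Q)=Q_xQ_y-QQ_{xy}$,
\[
 h^{jk}=\d_{x_i,x_l}(Q)=Q_{x_i}Q_{x_l}-QQ_{x_ix_l},\quad
 h^{jl}=Q_{x_i}Q_{x_k}-QQ_{x_ix_k},\quad
 h^{kl}=Q_{x_i}Q_{x_j}-QQ_{x_ix_j}.
\]
(Each right-hand side is automatically independent of the two differentiated variables because $Q$ is affine-linear in them, so it really is the biquadratic in the two superscript variables.) The crucial point, which I would isolate first, is that the second term in each expression carries a factor $Q$; hence on the hypersurface $Q=0$ one has simply
\[
 h^{jk}=Q_{x_i}Q_{x_l},\qquad h^{jl}=Q_{x_i}Q_{x_k},\qquad h^{kl}=Q_{x_i}Q_{x_j},
\]
and in particular all three share the common factor $Q_{x_i}$.

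Granting this, the forward direction is immediate: a solution singular with respect to $x_i$ satisfies $Q=0$ and $Q_{x_i}=0$, and substituting $Q_{x_i}=0$ into the three displayed products annihilates all of them at once, giving $h^{jk}=h^{jl}=h^{kl}=0$. For the converse, if $h^{jk}=0$ holds at a solution of $Q=0$, then $Q_{x_i}Q_{x_l}=h^{jk}=0$, so either $Q_{x_i}=0$ or $Q_{x_l}=0$; by the definition of a singular solution this says exactly that the solution is singular with respect to $x_i$ or with respect to $x_l$, as claimed. As a geometric cross-check, writing $Q=px_i+q$ with $p=Q_{x_i}$ one recovers the description given in the text, namely $h^{jk}=pq_{x_l}-p_{x_l}q$, which vanishes precisely on the singular curve $S_i=\{p=q=0\}$.

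The genuine content is therefore just the reduction $\d_{x_i,x_l}(Q)\equiv Q_{x_i}Q_{x_l}$ modulo $Q$, which is nothing more than reading off the factor $Q$ in the definition of $\d$; I do not expect any real obstacle here. The only care needed is that the fields take values in $\CP^1$, so some coordinate could a priori equal $\infty$, where the naive partial derivatives are meaningless. Since the operators $\d$ are M\"obius-covariant by (\ref{dMob}) and the three conditions $Q=0$, $Q_{x_i}=0$, and $h^{ij}=0$ are each preserved up to nonzero factors under the M\"obius action on the individual variables, I would handle this by first moving any infinite coordinate to a finite value and carrying out the computation there; this is pure bookkeeping and does not affect the argument.
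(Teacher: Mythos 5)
Your proposal is correct and takes essentially the same route as the paper, whose entire proof is the one-line observation you isolate: since $h^{jk}=Q_{x_i}Q_{x_l}-QQ_{x_ix_l}$, the conditions $h^{jk}=0$ and $Q_{x_i}Q_{x_l}=0$ are equivalent on solutions of $Q=0$, from which both implications follow at once. Your additional remarks (the simultaneous vanishing of all three biquadratics via the common factor $Q_{x_i}$, and the M\"obius-covariance bookkeeping for coordinates at infinity) are sound elaborations of details the paper leaves implicit.
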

\begin{proof}
Since $h^{jk}=Q_{x_i}Q_{x_l}-QQ_{x_i,x_l}$, the equations $h^{jk}=0$
and $Q_{x_i}Q_{x_l}=0$ are equivalent for the solutions of equation
$Q=0$.
\end{proof}

We will use the following notion of non-degeneracy for biquadratic
polynomials.
\begin{definition}\label{def: nondeg h}
A biquadratic polynomial $h(x,y)\in\cP_2^2$ is called {\em
non-degenerate}, if no polynomial in its equivalence class with
respect to M\"obius transformations is divisible by a factor $x-c$
or $y-c$ (with $c=\const$).
\end{definition}

According to this definition, a non-degenerate polynomial
$h(x,y)\in\cP_2^2$ is either irreducible, or of the form
$(\a_1xy+\b_1x+\g_1y+\d_1)(\a_2xy+\b_2x+\g_2y+\d_2)$ with
$\a_i\d_i\ne\b_i\g_i$. In both cases the equation $h=0$ defines
$y$ as a two-valued function on $x$ and vice versa. On the other
hand, for example the polynomial $h(x,y)=x-y^2$ (considered as
element of $\cP_2^2$) is, according to Definition \ref{def: nondeg
h}, a degenerate biquadratic, since under the inversion $x\mapsto
1/x$ it turns into $x(1-xy^2)$.
\smallskip

A fundamental role in our studies will be played by the following
notion.

\begin{definition}\label{def:Qtype} An affine-linear
function $Q\in\cP_4^1$ is said to be of type $Q$, if all four its
accompanying biquadratics $h^{jk}\in\cP_2^2$ are non-degenerate,
and is said to be of type $H$ otherwise.
\end{definition}


\section{3D-consistency and biquadratic curves}\label{s:sing}

Consider the system of equations
\begin{equation}\label{6Q}
\begin{array}{ll}
 A(x,x_1,x_2,x_{12})=0,\quad & \bar A(x_3,x_{13},x_{23},x_{123})=0, \\
 B(x,x_1,x_3,x_{13})=0,      & \bar B(x_2,x_{12},x_{23},x_{123})=0, \\
 C(x,x_2,x_3,x_{23})=0,      & \bar C(x_1,x_{12},x_{13},x_{123})=0
\end{array}
\end{equation}
on a cube, see Figure \ref{f.cube}. The functions $A,\ldots,\bar
C$ are affine linear (from $\cP_4^1$) and are not a priori
supposed to be related to each other in any way. We will use the
notation $A^{ij}=\d_{x_k,x_l}A$ for the accompanying biquadratic
polynomials.

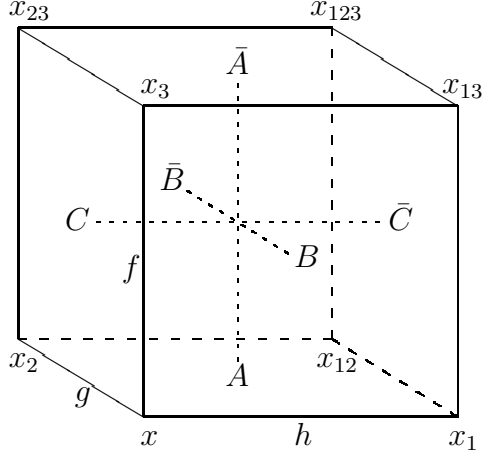
\begin{figure}
\setlength{\unitlength}{0.1em}
\begin{picture}(100,180)(-150,-30)
 \path(0,0)(100,0)(100,100)(0,100)(0,0)(-40,25)(-40,125)(60,125)(100,
100)
 \path(-40,125)(0,100)
 \dashline{3}(-40,25)(60,25)(100,0)\dashline{3}(60,25)(60,125)
 \put(-43,129){$x_{23}$}\put(53,129){$x_{123}$}
 \put(-1,104){$x_3$}\put(95,104){$x_{13}$}
 \put(-43,16){$x_2$}\put(55,16){$x_{12}$}
 \put(-1,-9){$x$}\put(97,-9){$x_1$}
 \put(-7,45){$f$}\put(48,-9){$h$}\put(-22,5){$g$}
 \put(26,10){$A$} \put(26,110){$\bar A$}
 \put(48,48){$B$} \put(5,73){$\bar B$}
 \put(-25,60){$C$} \put(78,60){$\bar C$}
 \dashline{1.5}(46,52.5)(14,72.5)
 \dashline{1.5}(30,18)(30,107)
 \dashline{1.5}(-15,62.5)(75,62.5)
\end{picture}
\caption{A 3D consistent system of quad-equations. The equations are
associated to faces of the cube: $A$ and $\bar A$ on the bottom and
on the top ones, $B$ and $\bar B$ on the front and on the back ones,
$C$ and $\bar C$ on the left and on the right ones, respectively.}
\label{f.cube}
\end{figure}

\begin{theorem}\label{th:h}
Let all six functions $A,\ldots,\bar C$ be of the type $Q$, and let
equations (\ref{6Q}) be 3D-consistent. Then

1) for any edge of the cube, the two biquadratic polynomials
corresponding to this edge (coming from the two faces sharing this
edge) coincide up to a constant factor;

2) the product of these factors around any vertex is equal to $-1$,
for example
\begin{equation}\label{ABCABC}
 A^{0,1}B^{0,3}C^{0,2}+A^{0,2}B^{0,1}C^{0,3}=0;
\end{equation}

3) the system (\ref{6Q}) possesses the tetrahedron property
$\partial x_{123}/\partial x=0$.
\end{theorem}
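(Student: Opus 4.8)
The plan is to derive all three statements from a single analysis of \emph{singular solutions}, treating the edge biquadratics $A^{0,1},\dots,\bar C^{ij}$ as the principal invariants. Claim~1 is the crux, and claims~2 and~3 will follow from it by comparatively local arguments. For claim~1, I fix one edge, say $(x,x_1)$, shared by the bottom face $A$ and the front face $B$, and aim to show that the two biquadratics attached to it, $A^{0,1}$ and $B^{0,1}$, have the same zero locus in $\CP^1\times\CP^1$; since both are non-degenerate (the faces are of type $Q$, Definition~\ref{def:Qtype}) and of bidegree $(2,2)$, coincidence of their zero curves upgrades to proportionality by a constant, which is exactly the assertion. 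The starting observation is that, as in \eqref{pq} and Lemma~\ref{l:sing}, the equation $A^{0,1}(x,x_1)=0$ characterizes those $(x,x_1)$ for which the affine-linear curve $A(x,x_1,\cdot,\cdot)=0$ in the remaining two variables degenerates into a pair of crossing lines; at the crossing point the solution is singular, i.e.\ $A=A_{x_2}=A_{x_{12}}=0$.

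The mechanism I would use is the propagation of such a singular solution through the consistency cube. Starting from a genuine cube solution that is singular with respect to one of the variables of $A$ (so that $A^{0,1}$ vanishes at its $(x,x_1)$, by Lemma~\ref{l:sing}), I would track how this degeneracy is transported by the remaining five equations and show that 3D-consistency forces the front face $B$ to be singular as well along the shared edge, whence $B^{0,1}$ vanishes at the same $(x,x_1)$. Letting the singular solution vary sweeps out the whole curve $A^{0,1}=0$, giving one inclusion of zero loci, and the symmetric argument gives the reverse. \emph{This transport step is where the real difficulty lies.} The subtlety is that the naive ``free variable'' picture is misleading: a configuration on $A^{0,1}=0$ at which, say, $x_{12}$ appears free is generically \emph{not} a limit of generic consistent solutions (the generic branch pins $x_{12}$), so one cannot simply demand independence of $x_{123}$ of the free variable. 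I would instead argue at a genuine singular solution and account carefully for bidegrees: a relation of the form ``an edge biquadratic of a far face vanishes along the graph of a face map'' is compatible with non-degeneracy precisely because the graph is a $(1,1)$ curve and a non-degenerate biquadratic may carry a $(1,1)$ factor (but never a factor $x-c$), so the Bezout bookkeeping on $\CP^1\times\CP^1$ must be done explicitly to extract the desired coincidence rather than a spurious contradiction.

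For claim~2, once claim~1 is available I write the three proportionalities at the vertex $x$, namely $A^{0,1}=a\,B^{0,1}$, $B^{0,3}=b\,C^{0,3}$ and $C^{0,2}=c\,A^{0,2}$, with constants $a,b,c$. Substituting these into the left-hand side of \eqref{ABCABC} reduces that identity to the single scalar relation $abc=-1$, so it suffices to compute the product of the three constants around the vertex. I expect to obtain $abc=-1$ from a local analysis at a common singular point of the three faces $A,B,C$ meeting at $x$: each degeneration contributes the slope of its pair of crossing lines, and the requirement that going once around the vertex closes up consistently pins the product of these slopes; the sign $-1$ is the signature of the resulting monodromy, and I expect it to emerge either from the leading order of the consistency condition or, equivalently, from the determinant identity \eqref{PQ} evaluated on the singular curve.

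For claim~3, I compute $\partial x_{123}/\partial x$ along the route $x\to x_{12}$ (via $A$), $x\to x_{13}$ (via $B$), and then $x_{123}$ via $\bar C(x_1,x_{12},x_{13},x_{123})=0$, obtaining
\[
 \frac{\partial x_{123}}{\partial x}
   =\frac{1}{\bar C_{x_{123}}}\left(\bar C_{x_{12}}\frac{A_x}{A_{x_{12}}}
     +\bar C_{x_{13}}\frac{B_x}{B_{x_{13}}}\right).
\]
Using the on-solution factorizations of the biquadratics (immediate from $h^{ij}=\delta_{x_k,x_l}(Q)$ together with $Q=0$, which give $h^{ij}=Q_{x_k}Q_{x_l}$), each ratio of first derivatives is expressed through the biquadratics of the respective face; after substituting the edge identifications of claim~1 and the constant relation $abc=-1$ of claim~2, the two terms in the bracket should cancel, yielding the tetrahedron property. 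I expect this to be a direct, if lengthy, computation whose only essential input is the $-1$ of claim~2, so that claims~1--3 form one logical chain with the singular-solution transport of claim~1 as the genuine obstacle.
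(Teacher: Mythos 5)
Your overall chain is inverted relative to what the consistency structure actually yields, and the two links you rely on most are exactly the ones left unproven. The workable mechanism is not a transport of singular solutions from face $A$ to face $B$: eliminating $x_{12},x_{13},x_{23}$ from \eqref{6Q} gives the polynomials $F,G,H$, and 3D-consistency forces the factorizations \eqref{fgh}, $F=f(x,x_3)K$ etc. Choosing $x_3$ on $f(x,x_3)=0$ and asking \emph{where} the indeterminacy of $x_{123}$ can originate (it cannot come from $B=0$ or $C=0$, which remain solvable for $x_{13}$, $x_{23}$, so it must come from the singularity of $\bar A$ in $x_{123}$, and then non-degeneracy of $\bar A^{3,13}$ makes $x_{13}$ independent of $x_1$, forcing $B^{0,3}=0$ and likewise $C^{0,3}=0$) shows only that the zero set of $f$ is \emph{contained} in those of $B^{0,3}$ and $C^{0,3}$. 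That containment does \emph{not} give coincidence of the two edge biquadratics: the reducible scenario $f=a^2$, $B^{0,3}=ab$, $C^{0,3}=ac$ is explicitly possible at this stage, so your claim that sweeping the singular solutions along $A^{0,1}=0$ lands the whole curve inside $B^{0,1}=0$ is precisely what fails. Consistency constrains only the locus where the cube genuinely loses determinacy, which may be a proper, even non-reduced, part of the biquadratic curves; your own caveat (a configuration with $x_{12}$ free is not a limit of generic consistent solutions) identifies the obstruction, but the ``Bezout bookkeeping'' you invoke supplies no mechanism to recover the missing components. What the containment \emph{does} give, via non-degeneracy (components of a non-degenerate biquadratic have bidegree $(1,1)$ or $(2,2)$, never $(0,1)$), is the degree count $\deg_x f=2$, hence $\deg_x K=0$ --- i.e.\ statement 3 comes \emph{first}.

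The second gap is the $abc=-1$ of your claim 2, for which you offer only expectations (``monodromy signature'', leading order, or \eqref{PQ} on the singular curve) with no actual argument. The only known route to \eqref{ABCABC} is the differentiation identity obtained from $x_{123}=d(x_1,x_2,x_3)=\bar a=\bar b=\bar c$, and every relation in that computation (e.g.\ $0=\bar a_{x_{13}}b_x+\bar a_{x_{23}}c_x$) uses that $d$ is independent of $x$ --- i.e.\ it presupposes the tetrahedron property. Since you propose to prove claim 3 \emph{from} claim 2, your chain $1\Rightarrow2\Rightarrow3$ is circular at the link $1\Rightarrow2$ unless you produce a genuinely new proof, which the proposal does not contain; note also that the linear equation $x+x_i+x_j+x_{ij}=0$ satisfies statement 1 while 2 and 3 fail, so non-degeneracy would have to enter your monodromy argument in an essential and so far unspecified way. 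Ironically, your step $2\Rightarrow3$ is sound: your two-term bracket cancels exactly when the analogue of \eqref{ABCABC} around the vertex $x_1$ holds together with the edge proportionalities, which is the paper's identity read backwards --- but it rests on the two unproven claims. The correct logical order is the reverse of yours: singularity analysis gives $\deg_xf=2$, hence the tetrahedron property (3), hence \eqref{ABCABC} (2), and finally statement 1 by separating variables in $B^{0,3}/C^{0,3}=-A^{0,2}/C^{0,2}\cdot B^{0,1}/A^{0,1}$, whose two sides depend only on $x$ and are then constant by non-degeneracy.
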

\begin{proof}
Elimination of $x_{12}$, $x_{13}$ and $x_{23}$ leads to equations
\begin{align*}
 F(\ox2,\ox1_1,\ox1_2,\ox3_3,\ox1_{123}) &=
   \bar A_{x_{13},x_{23}}BC-\bar A_{x_{23}}B_{x_{13}}C
  -\bar A_{x_{13}}BC_{x_{23}}+\bar AB_{x_{13}}C_{x_{23}}=0, \\
 G(\ox2,\ox1_1,\ox3_2,\ox1_3,\ox1_{123}) &=
   \bar B_{x_{12},x_{23}}AC-\bar B_{x_{23}}A_{x_{12}}C
  -\bar B_{x_{12}}AC_{x_{23}}+\bar BA_{x_{12}}C_{x_{23}}=0, \\
 H(\ox2,\ox3_1,\ox1_2,\ox1_3,\ox1_{123}) &=
   \bar C_{x_{12},x_{13}}AB-\bar C_{x_{13}}A_{x_{12}}B
  -\bar C_{x_{12}}AB_{x_{13}}+\bar CA_{x_{12}}B_{x_{13}}=0.
\end{align*}
Here the numbers over the arguments of $F,G,H$ indicate the degrees
of the right hand side in the variables (the degree is understood in
the projective sense, as in the example at the end of the previous
Section). Due to 3D-consistency, the expressions for $x_{123}$ as
functions of $x,x_1,x_2,x_3$ found from these equations, coincide.
Therefore these factorizations hold:
\begin{equation}\label{fgh}
 F=f(x,\ox2_3)K,\quad G=g(x,\ox2_2)K,\quad H=h(x,\ox2_1)K,\quad
 K=K(x,\ox1_1,\ox1_2,\ox1_3,\ox1_{123}),
\end{equation}
where $f,g,h$ are some polynomials of degree 2 in the second
argument. The degree in $x$ remains to be determined.

Let the initial data $x,x_1,x_2$ be free variables, and $x_3$ chosen
to satisfy the equation $f(x,x_3)=0$. Then $F\equiv0$, and thus the
system $B=C=\bar A=0$ does not determine the value of $x_{123}$.
Moreover, the equation $B=0$ can be solved with respect to $x_{13}$
since otherwise the initial data must be constrained by equation
$B^{0,1}(x,x_1)=0$. Analogously, the equation $C=0$ is solvable with
respect to $x_{23}$. Therefore, the uncertainty appears from the
singularity of equation $\bar A=0$ with respect to $x_{123}$. Hence,
the relation $\bar A^{3,13}(x_3,x_{13})=0$ is valid. In virtue of
the assumption of the theorem, $x_{13}$ is a (two-valued) function
of $x_3$ and does  not depend on $x_1$. This means that the equation
$B=0$ is singular with respect to $x_1$ and therefore
$B^{0,3}(x,x_3)=0$. Analogously, $C^{0,3}(x,x_3)=0$.

Thus, we have proven that if $x_3=\varphi(x)$ is a zero of the
polynomial $f$ then it is also a zero of the polynomials
$B^{0,3}$, $C^{0,3}$. If one of these three polynomials is
irreducible, then this already implies that they coincide up to a
constant factor. If the polynomials are reducible this may be
wrong since it is possible that $f=a^2$, $B^{0,3}=ab$,
$C^{0,3}=ac$, where $a,b,c$ are affine-linear in $x$, $x_3$. In
any case we have $\deg_xf=2$ and this is sufficient to complete
the proof.

Indeed, this implies $\deg_xK=0$, that is the tetrahedron property
is valid. In turn, this implies the relation (\ref{ABCABC}), as was
shown in \cite{Adler_Bobenko_Suris_2003}. Recall this calculation:
let us rewrite the system (\ref{6Q}) in the form
\begin{gather*}
 x_{12}=a(x,x_1,x_2),\quad x_{13}=b(x,x_1,x_3),\quad x_{23}=c(x,x_2,
x_3),\\
 x_{123}=d(x_1,x_2,x_3)=\bar a(x_3,x_{13},x_{23})
     =\bar b(x_2,x_{12},x_{23})=\bar c(x_1,x_{12},x_{13})
\end{gather*}
and find by differentiation
\begin{align*}
 d_{x_1}&=\bar a_{x_{13}}b_{x_1},& d_{x_2}&=\bar a_{x_{23}}c_{x_2},&
       0&=\bar a_{x_{13}}b_x+\bar a_{x_{23}}c_x,\\
 d_{x_1}&=\bar b_{x_{12}}a_{x_1},& d_{x_3}&=\bar b_{x_{23}}c_{x_3},&
       0&=\bar b_{x_{12}}a_x+\bar b_{x_{23}}c_x,\\
 d_{x_2}&=\bar c_{x_{12}}a_{x_2},& d_{x_3}&=\bar c_{x_{13}}b_{x_3},&
       0&=\bar c_{x_{12}}a_x+\bar c_{x_{13}}b_x.
\end{align*}
These equations readily imply the relation
\[
 a_{x_2}b_{x_1}c_{x_3}+a_{x_1}b_{x_3}c_{x_2}=0.
\]
The latter is equivalent to (\ref{ABCABC}) by virtue of the
identity $a_{x_2}/a_{x_1}=A^{0,1}/A^{0,2}$. The variables in
equation (\ref{ABCABC}) separate:
$B^{0,3}/C^{0,3}=-A^{0,2}/C^{0,2}\cdot B^{0,1}/A^{0,1}$, so that
$B^{0,3}/C^{0,3}$ may only depend on $x$. In view of the
assumption of the theorem this ratio is constant.
\end{proof}

There exist 3D-consistent systems whose equations are not of type
$Q$. The statements of Theorem \ref{th:h} may or may not be valid
for such a system, as the following examples demonstrate.

\begin{example}
The simplest 3D-consistent equation is the linear one
\[
 x+x_i+x_j+x_{ij}=0.
\]
In this case, all the biquadratic polynomials are equal to $1$, so
that the statement 1) is fulfilled and statement 2) is not. Since 2)
is a consequence of the tetrahedron property 3), the latter cannot
hold either. Indeed,
\[
 x_{123}=2x+x_1+x_2+x_3.
\]
The factor $f$ in this example is also equal to $1$, but this is a
coincidence, destroyed by M\"obius changes of variables. Indeed,
in this case $\deg_xK=1$, and after the inversion of all variables
$x_I\to1/x_I$ we come to $f=xx^2_3$, while $B^{0,3}$ turns into
$x^2x^2_3$.
\end{example}

\begin{example}
The Hietarinta equation \cite{Hietarinta}
\begin{equation}\label{H}
 (x-e^{(j)})(x_{ij}-o^{(j)})(x_i-o^{(i)})(x_j-e^{(i)})
 -(x-e^{(i)})(x_{ij}-o^{(i)})(x_i-e^{(j)})(x_j-o^{(j)})=0
\end{equation}
is 3D-consistent, but the statement 1) does not hold:
\begin{align*}
 B^{0,3}&=(e^{(3)}-o^{(1)})(o^{(1)}-o^{(3)})
          (x-e^{(3)})(x-e^{(1)})(x_3-e^{(1)})(x_3-o^{(3)}),\\
 C^{0,3}&=(e^{(3)}-o^{(2)})(o^{(2)}-o^{(3)})
  (x-e^{(3)})(x-e^{(2)})(x_3-e^{(2)})(x_3-o^{(3)}).
\end{align*}
The factor $f$ is proportional to
$(x-e^{(3)})(x_3-e^{(1)})(x_3-e^{(2)})$. Consequently, $\deg_xK=1$
and the tetrahedron property does not hold.
\end{example}

\begin{example}
Probably the best known example of a 3D-consistent system is given
by the discrete potential KdV equation
\begin{equation}\label{dKdV}
 (x-x_{ij})(x_i-x_j)+\A{i}-\A{j}=0.
\end{equation}
In this case all statements of the theorem are valid, in spite of
the degeneracy of biquadratics:
\[
 B^{0,3}=\A1-\A3,\quad C^{0,3}=\A2-\A3,\quad f=1.
\]
(Recall that the degree is understood in the projective sense.
Under the inversion these polynomials turn into $x^2x^2_3$.)
\end{example}

\begin{example}
Equation (\ref{Q1})
\begin{multline*}
 Q(x,x_1,x_2,x_{12};\A1,\A2;\d)\\
 =\A{1}(x-x_2)(x_1-x_{12})-\A{2}(x-x_1)(x_2-x_{12})+\d\A{1}\A{2}(\A{1}-
\A{2})=0
\end{multline*}
is consistent not only with its own copies (see
\cite{Adler_Bobenko_Suris_2003} and Theorem \ref{th:Qlist} below),
but also with the linear equations. Namely, the system formed of the
equations
\[
 Q(x,x_1,x_{12},x_2;\A1,\A2;\d)=0,\quad
 x_{13}-x_3=x_1-x,\quad x_{23}-x_3=x_2-x
\]
and their copies on the opposite faces, is 3D-consistent. In this
case the edge $(x,x_3)$ carries the polynomials
\[
 B^{0,3}=C^{0,3}=-1,\quad f=1.
\]
However, in contrast to the previous example, the tetrahedron
property is not valid and $\deg_xK=2$. This means that the
polynomial $f$ is not biquadratic and its image under inversion is
$x^2_3$. Moreover, the biquadratic polynomials corresponding to
the edge $(x,x_1)$ do not coincide:
\[
 A^{0,1}=\A2(\A1-\A2)\big((x_1-x)^2-\d(\A1)^2\big),\quad
 B^{0,1}=-1,\quad h=1.
\]
We see in this example that it is possible that some of the
biquadratic polynomials satisfy the assumption of the theorem and
the others do not.
\end{example}

\section{Classification of biquadratic polynomials}\label{s:rrh}

Diagram (\ref{diagram}) suggests an algorithm for the
classification of affine-linear equations $Q=0$ modulo M\"obius
transformations. The first step is to use M\"obius transformations
to bring the polynomials $r_i(x_i)$ associated to the vertices of
the quadrilateral into canonical form. According to formulas
(\ref{dMob}),
\[
 \d_{x_l}(\d_{x_j,x_k}(M[Q]))=\D^2_j\D^2_k\D^2_lM[\d_{x_l}(\d_{x_j,
x_k}(Q))]=\frac{C}{\D^2_i}M[r_i],
\]
where $C=\D_1^2\D_2^2\D_3^2\D_4^2$.  Since the polynomial $Q$ is
defined up to an arbitrary factor, we may assume that M\"obius
changes of variables in the equation $Q=0$ induce transformations
\[
 r_i\mapsto \frac{1}{\D^2_i}M[r_i]
\]
of the polynomials $r_i$. This allows us to bring each $r_i$ into
one of the following six forms:
\[
 r=(x^2-1)(k^2x^2-1),\quad
 r=x^2-1,\quad r=x^2,\quad r=x,\quad r=1,\quad r=0,
\]
according to the six possibilities for the root distribution of
$r$: four simple roots, two simple roots and one double, two pairs
of double roots, one simple root and one triple, one quadruple
root, or, finally, $r$ vanishes identically. Note that in the
first canonical form it is always assumed that $k\neq 0,\pm 1$, so
that the second and third forms are not considered as particular
cases of the first one.

Not every pair of such polynomials is admissible as a pair of
polynomials at two adjacent vertices, since the relative invariants
of the polynomials of such a pair must coincide according to
(\ref{gdh}). We identify all admissible pairs, and then solve the
problem of reconstruction of the biquadratic polynomial (\ref{h}) by
the pair of its discriminants
\begin{equation}\label{hrR}
\begin{aligned}
 \d_y(h)=h^2_y-2hh_{yy}&=r_1(x),\quad
 \d_x(h)=h^2_x-2hh_{xx}&=r_2(y),
\end{aligned}
\end{equation}
which is equivalent to a system of $10$ (bilinear) equations for $9$
unknown coefficients of the polynomial $h$.

\begin{theorem}\label{th:hlist}
Biquadratic polynomials with the pair of discriminants $(r_1(x),
r_2(y))$ in canonical form exist for the following pairs, up to
the permutation of $x$, $y$:
\[
\begin{array}{c|cccccc}
                & (y^2-1)(k^2y^2-1) & y^2-1 & y^2 & y & 1 & 0 \\
 \hline
  (x^2-1)(k^2x^2-1) &      +        &       &     &   &   &   \\
   x^2-1            &               &   +   &  +  &   &   &   \\
   x^2              &               &       &  +  &   &   &   \\
   x                &               &       &     & + & + &   \\
   1                &               &       &     &   & + & + \\
   0                &               &       &     &   &   & +
\end{array}
\]
These polynomials $h$ and their relative invariants $i_2,i_3$ are:
\begin{align}
\nonumber
 & (r(x),r(y)),~ r(x)=(x^2-1)(k^2x^2-1): \\
\label{1111:1111}
 & \qquad h=\frac{1}{2\a}(k^2\a^2x^2y^2+2Axy-x^2-y^2+\a^2),\quad
A^2=r(\a), \\
\nonumber
 & \qquad i_2=3(k^2\a^2+\a^{-2})-k^2-1,\quad 4i_3=A(k^2\a-\a^{-3}); \\
\label{112:112}
 & (x^2-\d,y^2-\d):~
  h=\frac{\a}{1-\a^2}(x^2+y^2)-\frac{1+\a^2}{1-\a^2}xy+\frac{\d(1-
\a^2)}{4\a}\,,\\
\nonumber
 &\qquad i_2=\frac{1+10\a^2+\a^4}{(1-\a^2)^2},\quad
         i_3=\frac{\a^2(1+\a^2)}{(1-\a^2)^3}\,;\\
\label{13:13}
 & (x,y):~ h=\frac{1}{4\a}(x-y)^2-\frac{\a}{2}(x+y)+\frac{\a^3}{4},\quad
  i_2=\frac{3}{4\a^2},~i_3=\frac{1}{32\a^3}\,; \\
\label{22:22}
 & (x^2,y^2):~ h=\la x^2+\mu xy+\nu y^2,\quad
   \mu^2-4\la\nu=1,~i_2=1+12\la\nu,~i_3=-\la\mu\nu; \\
\label{22:22'}
 & \mspace{76mu} h=\la x^2y^2+\mu xy+\nu,\quad
   \mu^2-4\la\nu=1,~i_2=1+12\la\nu,~i_3=\la\mu\nu; \\
\label{4:4}
 & (1,1):~ h=\la(x\pm y)^2+\mu(x\pm y)+\nu,\quad
  \mu^2-4\la\nu=1,~i_2=12\la^2,~i_3=\mp2\la^3; \\
\label{0:0}
 & (0,0):~ h=(\ka xy+\la x+\mu y+\nu)^2,
  \quad i_2=12(\ka\nu-\la\mu)^2,~i_3=2(\ka\nu-\la\mu)^3;\\
\label{112:22}
 & (x^2-1,y^2):~ h=\a y^2\pm xy+\frac{1}{4\a},\quad i_2=1,~i_3=0;\\
\label{13:4}
 & (x,1):~ h=\pm\frac14(y-\a)^2\mp x, \quad i_2=0,~i_3=0; \\
\label{4:0}
 & (1,0):~ h=\la y^2+\mu y+\nu,\quad \mu^2-4\la\nu=1,~i_2=0,~i_3=0.
\end{align}
\end{theorem}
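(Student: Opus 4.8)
The plan is to reduce the classification to a finite collection of bilinear systems and to solve each one, using the invariant constraint from Lemma~\ref{l:dd} to discard most pairs in advance. Writing $h$ as a quadratic in $y$, $h=A(x)y^2+B(x)y+C(x)$ with $A,B,C\in\cP^2_1$, a one-line computation gives $\d_y(h)=B^2-4AC$, and symmetrically, writing $h=\tilde A(y)x^2+\tilde B(y)x+\tilde C(y)$, one gets $\d_x(h)=\tilde B^2-4\tilde A\tilde C$. Thus the reconstruction problem \eqref{hrR} is precisely the requirement that these two quartics equal $r_1(x)$ and $r_2(y)$; matching coefficients produces the announced system of ten bilinear equations in the nine coefficients $h_{ij}$ of \eqref{h}.

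First I would prune the table. By \eqref{gdh} the two discriminants of one and the same $h$ must share their relative invariants, $g_k(r_1,x)=g_k(r_2,y)$ for $k=2,3$. Evaluating $(g_2,g_3)$ on the six canonical forms sorts them into three classes: the generic quartic $(x^2-1)(k^2x^2-1)$ (with $k$-dependent invariants and nonzero discriminant), the pair $\{x^2-1,\,x^2\}$ (both giving $(g_2,g_3)=(\tfrac1{12},-\tfrac1{216})$), and the triple $\{x,\,1,\,0\}$ (all giving $(0,0)$). Hence no biquadratic can mix forms from different classes, which already empties every off-diagonal block of the table and reduces the problem to the same-class candidate pairs.

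Next I would realize the surviving pairs by solving the bilinear system case by case. The efficient device is to use the residual M\"obius stabilizer of each canonical form to cut the nine coefficients down to a handful of essential parameters before substituting into $\d_y(h)=r_1$, $\d_x(h)=r_2$; the resulting low-dimensional system is then solved outright, yielding the explicit polynomials \eqref{1111:1111}--\eqref{4:0}, and $i_2,i_3$ follow by inserting each $h$ into the invariant formulas of Section~\ref{s:rhq}. Two phenomena must be handled with care. The more degenerate the forms, the larger the stabilizer and the bigger the solution set: for $(x^2,y^2)$ one finds two distinct families \eqref{22:22} and \eqref{22:22'}, several cases split into discrete $\pm$-branches, and the single formula \eqref{112:112} simultaneously covers $(x^2-1,y^2-1)$ at $\d=1$ and part of $(x^2,y^2)$ at $\d=0$. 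The case $r_2=0$ is the decisive one for the shape of the table: $\d_x(h)=0$ forces $h$ either to be the square of a bilinear form (whence $r_1=0$, the pair $(0,0)$) or to have the form $\tilde A(y)(x-c)^2$ (whence $r_1$ is proportional to $(x-c)^4$, i.e.\ the canonical form $1$, the pair $(1,0)$). In neither subcase can $r_1$ acquire the root structure of $x$ (one simple and one triple root), which is exactly why the cell $(x,0)$ stays empty even though its invariants vanish.

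I expect the main obstacle to be neither the invariant bookkeeping nor any single reconstruction, but the exhaustiveness of the degenerate cases: when a canonical form has a large stabilizer the bilinear system has a positive-dimensional and sometimes multi-component solution set, so the real work is to enumerate \emph{all} branches (and their normalizations modulo the residual symmetry) without either missing a family such as \eqref{22:22'} or admitting a spurious pair such as $(x,0)$.
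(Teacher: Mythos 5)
Your proposal is correct and follows essentially the same route as the paper: reduce \eqref{hrR} to the ten bilinear coefficient equations, prune candidate pairs by matching the relative invariants $g_2,g_3$ (the paper phrases this via $g_2^3\ne 27g_3^2$ holding only for the generic quartic and via $12g_2=a^2$, $216g_3=-a^3$ for quadratics, which yields exactly your three classes), and then solve the surviving cases directly, with $(x,0)$ found empty. Your factorization argument for the $r_2=0$ column (that $\d_x(h)=0$ forces $h=s(y)\bigl(u(y)x+v(y)\bigr)^2$, so $r_1$ can only have a quadruple root or vanish, never the $1{+}3$ root pattern of $x$) is a nice structural justification of what the paper settles by direct computation, but it does not change the method.
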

\begin{proof}
The list is obtained by solving the system (\ref{hrR}) for
different canonical pairs $(r_1,r_2)$. The exhaustion of cases is
shortened if we notice that $g^3_2\ne27g^2_3$ in one case only,
and that the relative invariants for the polynomial
$r_1=ax^2+bx+c$ are $12g_2=a^2$, $216g_3=-a^3$, so that the second
polynomial must be of the form $r_2=ay^2+\tilde by+\tilde c$. The
solution for the pair $(x,0)$ turns out to be empty.
\end{proof}


\section{Classification of affine-linear equations of type $Q$}

It is important to note that after bringing the polynomials
$r_i(x_i)$ into canonical forms, one still has some freedom. Namely,
one can use M\"obius transformations which do not change the form of
$r$ to further simplify the biquadratics $h$ and the affine-linear
equation $Q$. In particular, the list of Theorem \ref{th:hlist} is
slightly more detailed than the list of biquadratics modulo M\"obius
transformations.

Indeed, the polynomial (\ref{22:22}) turns into (\ref{22:22'})
under the inversion of $x$; the change $x\mapsto-x$ allows to fix
the signs in the polynomials (\ref{4:4}), (\ref{112:22}); in the
case (\ref{13:4}), the sign is fixed by the change $x\mapsto-x$,
$y\mapsto iy$; the polynomials (\ref{0:0}), (\ref{4:0}) admit a
further simplification.

However, a transformation of any one of the four variables for a
quadrilateral influences biquadratic polynomials on two edges
adjacent to the correspondent vertex, and therefore it cannot a
priori be guaranteed that all four biquadratics can be brought to
some definite form simultaneously. For example, if each vertex
corresponds to the polynomial $r_i=x^2_i$ then the edges may
correspond to the polynomials either of the form (\ref{22:22}) or
of the form (\ref{22:22'}). We do not know a priori that these
polynomials can be always brought into the same form (even with
different coefficients). Actually, this is possible, as the proof
of the following theorem shows.

The next step is the reconstruction of the affine-linear polynomials
from the biquadratic ones. Since our goal is only the classification
of systems of type $Q$ equations, we will not solve this problem in
its full generality. We leave aside the cases (\ref{112:22}),
(\ref{13:4}) and (\ref{4:0}), since the corresponding biquadratics
are degenerate. For the same reason, the additional restrictions on
the values of parameters are imposed: $\la\nu\ne0$ in the cases
(\ref{22:22}), (\ref{22:22'}), $\la\ne0$ in the case (\ref{4:4}),
and $\ka\nu-\la\mu\ne0$ in the case (\ref{0:0}).

\begin{theorem}\label{th:qlist}
Any affine-linear equation of type $Q$ is equivalent, up to
M\"obius transformations, to one of the equations in the following
list:
\begin{align}
\nonumber
 & \sn(\a)\sn(\b)\sn(\a+\b)(k^2x_1x_2x_3x_4+1)
   -\sn(\a)(x_1x_2+x_3x_4) \\
\label{q4}
 &\qquad\qquad -\sn(\b)(x_1x_4+x_2x_3)+\sn(\a+\b)(x_1x_3+x_2x_4)=0,\\
\nonumber
 & (\a-\a^{-1})(x_1x_2+x_3x_4)+(\b-\b^{-1})(x_1x_4+x_2x_3)
   -(\a\b-\a^{-1}\b^{-1})(x_1x_3+x_2x_4)\\
\label{q3}
 &\qquad\qquad +\frac{\d}{4}(\a-\a^{-1})(\b-\b^{-1})(\a\b-\a^{-1}\b^{-
1})=0, \\
\nonumber
 & \a(x_1-x_4)(x_2-x_3)+\b(x_1-x_2)(x_4-x_3) \\
\label{q2}
 & \qquad\qquad
 -\a\b(\a+\b)(x_1+x_2+x_3+x_4)+\a\b(\a+\b)(\a^2+\a\b+\b^2)=0,\\
\label{q1}
 & \a(x_1-x_4)(x_2-x_3)+\b(x_1-x_2)(x_4-x_3)-\d\a\b(\a+\b)=0.
\end{align}
\end{theorem}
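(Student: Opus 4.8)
The plan is to carry out steps (c) and (d) from the introduction's outline, using the machinery of Sections \ref{s:rhq}--\ref{s:rrh} combined with the structural result of Theorem \ref{th:h}. By Theorem \ref{th:h}, the 3D-consistency of a system of type-$Q$ equations forces, on each edge of the consistency cube, the two biquadratics coming from the two adjacent faces to coincide up to a constant factor. This lets us attach to every \emph{vertex} of the cube a single quartic polynomial $r_i$ in the corresponding variable, well-defined up to a constant multiple; by the covariance relations \eqref{gdh} and the invariant-matching built into diagram \eqref{diagram}, all these quartics share the same Weierstrass invariants $g_2,g_3$, hence the same root structure. The first concrete step, then, is to use M\"obius transformations on each variable independently to bring every $r_i$ into one of the six canonical forms listed before Theorem \ref{th:hlist}. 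Because the invariants agree across all vertices, the whole cube falls into one of only a few global cases, indexed by which canonical form the common quartic takes.

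Next I would, for each surviving canonical case, read off the admissible biquadratics from Theorem \ref{th:hlist}. The restriction to type $Q$ eliminates the degenerate rows \eqref{112:22}, \eqref{13:4}, \eqref{4:0}, and the non-degeneracy conditions $\la\nu\ne0$, $\la\ne0$, $\ka\nu-\la\mu\ne0$ discard the degenerate sub-loci of the remaining families, so only the four genuinely non-degenerate families $(x^2-1)(k^2x^2-1)$, $x^2-\d$, $x$, and the pairs $(x^2,y^2)$, $(1,1)$, $(0,0)$ survive as candidates for the edges. A subtle point flagged in the text must be dealt with here: a single variable sits at the corner of several faces, and an a priori ambiguity (for instance between forms \eqref{22:22} and \eqref{22:22'}, which are interchanged by inversion) could in principle place incompatible biquadratics on edges meeting at one vertex. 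I would resolve this by showing that the residual M\"obius freedom fixing the canonical form of $r_i$ can be used to synchronize the forms on all edges at that vertex simultaneously; the compatibility identities \eqref{ddQ}, \eqref{idQ} of Lemma \ref{l:dd}, together with the factor-product relation \eqref{ABCABC}, are exactly what guarantee the choices can be made consistently around the cube.

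With a coherent assignment of non-degenerate biquadratics to all edges in hand, the final step reverses the descent. By Lemma \ref{l:QQ}, once three of the four accompanying biquadratics of a face are known, the affine-linear $Q$ is determined: identity \eqref{PQ} exhibits $Q$ as a factor of $h^{12}h^{34}-h^{14}h^{23}$, and differentiating \eqref{Qx} yields $Q^2$ as an explicit rational expression in the $h^{ij}$ and their derivatives. So for each global case I would substitute the canonical biquadratics from Theorem \ref{th:hlist}, extract $Q$ via \eqref{PQ}--\eqref{Qx}, and simplify using the remaining M\"obius normalizations (e.g. the sign fixings and the parametrizations $A^2=r(\a)$, $\mu^2-4\la\nu=1$ recorded in the biquadratic list) to arrive at the four equations \eqref{q4}--\eqref{q1}. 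The elliptic case $r=(x^2-1)(k^2x^2-1)$ produces \eqref{q4} once the parameter $\a$ with $A^2=r(\a)$ is re-expressed through Jacobi $\sn$ via the addition formula; the cases $x^2-\d$, $x$, and the multiple-root families collapse to \eqref{q3}, \eqref{q2}, \eqref{q1} respectively, some of them as degenerations obtained by limits of $k$.

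The main obstacle I anticipate is not the final reconstruction, which is algorithmic given Lemma \ref{l:QQ}, but the synchronization step in the middle: proving that the per-vertex M\"obius freedom genuinely suffices to bring all biquadratics meeting at a vertex into matching canonical shape, so that the candidate lists from Theorem \ref{th:hlist} actually glue into a consistent labelling of the cube rather than merely of isolated faces. Controlling this global compatibility—and checking, case by case, that no admissible combination is overlooked and that each reconstructed $Q$ is genuinely 3D-consistent and of type $Q$—is where the real work lies.
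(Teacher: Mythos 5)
You have misread the scope of the theorem, and this creates a genuine logical gap. Theorem \ref{th:qlist} classifies a \emph{single} affine-linear equation of type $Q$ on one quadrilateral; no cube and no 3D-consistency appear in its hypothesis. Your proof, however, is built entirely on the consistency cube: you invoke Theorem \ref{th:h} (whose hypothesis is a 3D-consistent system of six type-$Q$ equations) to glue biquadratics across adjacent faces, you attach quartics to the eight vertices of the cube, and you use the vertex relation (\ref{ABCABC}) in your synchronization step. As written, your argument could at best classify those type-$Q$ equations which are embeddable as a face of a 3D-consistent system --- a strictly weaker statement than the theorem (though, to be fair, that weaker statement would suffice for the downstream Theorem \ref{th:Qlist}). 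The paper needs none of this: for a single $Q\in\cP_4^1$, Lemma \ref{l:dd} alone --- the identities (\ref{ddQ}), (\ref{idQ}), (\ref{gdh}) --- produces the commutative diagram (\ref{diagram}) on one face, with all four vertex quartics $r_i$ sharing the invariants $g_2,g_3$ and opposite edges sharing $i_2,i_3$. That purely local covariance is what replaces your Theorem-\ref{th:h} gluing, and it is available without any consistency assumption. Relatedly, your closing remark that one must check ``each reconstructed $Q$ is genuinely 3D-consistent'' is not part of this theorem at all.

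Once the cube scaffolding is stripped away, the remainder of your plan does track the paper's actual proof: normalize the $r_i$ to the six canonical forms, invoke Theorem \ref{th:hlist} with the non-degeneracy restrictions to list the admissible edge biquadratics, use the residual per-variable M\"obius freedom to synchronize parameters, and reconstruct $Q$ via Lemma \ref{l:QQ} and (\ref{Qx}). But you leave the synchronization vague precisely where the paper has a sharp mechanism: by (\ref{idQ}) the parameter of $h^{34}$ can differ from that of $h^{12}$ in only finitely many ways (four in the elliptic case), a single M\"obius change of $x_3$ exploiting the symmetries of (\ref{1111:1111}) reduces to $(\tilde\a,\tilde A)=(\a,A)$, and then the fourth biquadratic $h^{14}$ is not chosen but \emph{forced} by formula (\ref{Q14}), with its parameter automatically matching $(\b,B)$ --- so only three edges ever need adjusting, and the feared global obstruction around the quadrilateral evaporates. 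One further slip: the pair $(x^2,y^2)$ does not collapse to (\ref{q1}); after scalings and inversions the cases (\ref{22:22}), (\ref{22:22'}) reduce to (\ref{112:112}) without constant term and yield (\ref{q3}) at $\d=0$, while only $(1,1)$ and $(0,0)$ produce (\ref{q1}) (with $\d=1$ and $\d=0$ respectively), and no limits in $k$ are taken anywhere.
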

\begin{proof}
Let the polynomials $h^{12}$, $h^{23}$, $h^{34}$ and $h^{14}$ be
of the form (\ref{1111:1111}),  with the parameters $(\a,A)$,
$(\b,B)$, $(\tilde\a,\tilde A)$ and $(\tilde\b,\tilde B)$,
respectively, lying on the elliptic curve $A^2=r(\a)$. The
relative invariants $i_2,i_3$ of $h^{12}$ and $h^{34}$ must
coincide as a corollary of (\ref{idQ}), and it is easy to check
that this condition allows only the following possible values for
$(\tilde\a,\tilde A)$:
\[
 (\a,A),\quad (-\a,-A),\quad \frac{1}{k\a^2}(\a,-A),\quad
 \frac{1}{k\a^2}(-\a,A)
\]
and analogously for $(\tilde\b,\tilde B)$. At first glance, it
seems like we had to examine $16$ quadruples of $h^{ij}$, but
actually the situation is much more favorable. Indeed, according
to (\ref{dMob}), a M\"obius change of variables in the equation
$Q=0$ yields
\[
 \delta_{x_k,x_l}(M[Q])=\D_k\D_lM[\delta_{x_k,x_l}(Q)]=
 \frac{C}{\D_i\D_j}M[h^{ij}],
\]
where $C=\D_1\D_2\D_3\D_4$. Since $Q$ is only defined up to a
multiplicative constant, we may assume that a M\"obius change of
variables induces transformations
\[
 h^{ij}\mapsto\frac{1}{\D_i\D_j}M[h^{ij}]
\]
on the biquadratic polynomials $h^{ij}$. In particular, if
\[
h^{34}=h(x_3,x_4,-\alpha,-A)\quad {\rm or}\quad
h^{34}=h\Bigl(x_3,x_4,\frac{1}{k\alpha},-\frac{A}{k\alpha^2}\Bigr),
\]
then the corresponding one of the M\"obius transformations
$x_3\mapsto -x_3$ or $x_3\mapsto 1/(kx_3)$ will change $h^{34}$ to
\[
 -h(-x_3,x_4;-\a,-A),\quad {\rm resp.}\quad
  -kx_3^2h\Bigl(\frac{1}{kx_3},x_4;\frac{1}{k\a},-\frac{A}{k\a^2}\Bigr),
\]
both of which coincide with $h(x_3,x_4,\alpha,A)$ due to
symmetries of the polynomial (\ref{1111:1111}). Thus, performing a
suitable M\"obius transformation of the variable $x_3$ (which does
not affect the polynomial $r(x_3)$), we may assume without loss of
generality that $(\tilde\a,\tilde A)=(\a,A)$. After that, the
polynomial $h^{14}$ is uniquely found according to formula
(\ref{Q14}), and it turns out that the equality $(\tilde\b,\tilde
B)=(\b,B)$ is fulfilled automatically. Thus, the change of one
variable allows to achieve the equality of the parameters
corresponding to the opposite edges of the square. A direct
computation using formula (\ref{Qx}) yields the equation
\[
 \a\b\g(k^2x_1x_2x_3x_4+1)+\a(x_1x_2+x_3x_4)+\b(x_1x_4+x_2x_3)+\g(x_1x_3+x_2x_4)=0,
\]
where $\g=(\a B+\b A)/(k^2\a^2\b^2-1)$, and the final change
$\a\to\sn(\a)$, $A\to\sn'(\a)$ and analogously for $\b$ brings it
to the form (\ref{q4}).

Also in the other cases, suitable M\"obius changes of the
variables $x_2$, $x_3$, $x_4$ allow us to bring the polynomials
into the form $h^{12}=h(x_1,x_2,\a)$, $h^{23}=h(x_2,x_3,\b)$,
$h^{34}=h(x_3,x_4,\a)$. Moreover, a direct computation using
formula (\ref{Q14}) proves that also $h^{14}=h(x_1,x_4,\b)$. Then
the answer is found by use of (\ref{Qx}).

To give a few more details, polynomials (\ref{112:112}) give rise
to equation (\ref{q3}). In this case equations (\ref{idQ}) imply
that the parameters $\a$ of the polynomials $h^{12}$ and $h^{34}$
differ at most by sign. This is compensated by the change
$x_3\to-x_3$ which is possible due to the symmetry
$h(x,y,\a)=-h(-x,y,-\a)$.

In the cases (\ref{22:22}), (\ref{22:22'}), the appropriate scalings
and, if necessary, inversions of the variables $x_2$, $x_3$, $x_4$
allow us to bring $h^{12}$, $h^{23}$, $h^{34}$ into the form
(\ref{112:112}) without the constant term; therefore we simply
obtain this case at $\d=0$.

Polynomial (\ref{13:13}) corresponds to equation (\ref{q2}). This
is the most simple case since the parameters are fixed already by
condition (\ref{idQ}).

In the case (\ref{4:4}), appropriate shifts and, if necessary,
changes of sign of the variables $x_2$, $x_3$, $x_4$ allow us to
bring $h^{12}$, $h^{23}$, $h^{34}$ into the form
$2h(x,y,\a)=\a^{-1}(x-y)^2-\d\a$ with $\d=1$. Analogously, in the
case (\ref{0:0}) appropriate M\"obius transforms of the general
form bring $h^{12}$, $h^{23}$, $h^{34}$ into the same form with
$\d=0$. In both cases, the invariants are $i_2=3\alpha^{-2}$,
$4i_3=\alpha^{-3}$, therefore the parameters of $h^{12}$ and
$h^{34}$ coincide and no further changes are necessary. The
resulting equation is (\ref{q1}).
\end{proof}


\section{ Classification of 3D-consistent systems of type $Q$}
\label{s:Q}

Theorem \ref{th:h} provides very strong necessary conditions for
3D-consistency in the case when all equations are of type $Q$. This
will allow us to classify such systems in this section. In this
final step we have to arrange the obtained equation around the cube
and to choose the parameters in such a way that the condition
(\ref{ABCABC}) is fulfilled. The effect of this condition may be a
change of sign or an inversion of one of the parameters.

In the following Theorem we return to the notation of the variables
and parameters according to the shifts on the lattice. The ordering
of the equations corresponds to the previous Theorem, and we name
these equations as in \cite{Adler_Bobenko_Suris_2003}.

\begin{theorem}\label{th:Qlist}
Any 3D-consistent system (\ref{6Q}) of type $Q$ is, up to M\"obius
transformations, one of the following list:
\begin{align}
\nonumber
 & \sn(\A{i})\sn(\A{j})\sn(\A{i}-\A{j})(k^2xx_ix_jx_{ij}+1)
  +\sn(\A{i})(xx_i+x_jx_{ij}) \\
\label{Q4}\tag{\hbox{$Q_4$}}
 &\qquad\qquad
  -\sn(\A{j})(xx_j+x_ix_{ij})-\sn(\A{i}-\A{j})(xx_{ij}+x_ix_j)=0,\\
\nonumber
 & \Bigl(\A{i}-\frac1{\A{i}}\Bigr)(xx_i+x_jx_{ij})
  -\Bigl(\A{j}-\frac1{\A{j}}\Bigr)(xx_j+x_ix_{ij})
  -\Bigl(\frac{\A{i}}{\A{j}}-\frac{\A{j}}{\A{i}}\Bigr)(xx_{ij}+x_ix_j)\\
\label{Q3}\tag{\hbox{$Q_3$}}
 & \qquad\qquad
  -\frac{\d}{4}\Bigl(\A{i}-\frac1{\A{i}}\Bigr)
   \Bigl(\A{j}-\frac1{\A{j}}\Bigr)
   \Bigl(\frac{\A{i}}{\A{j}}-\frac{\A{j}}{\A{i}}\Bigr)=0,\\
\nonumber
 & \A{i}(x-x_j)(x_i-x_{ij})-\A{j}(x-x_i)(x_j-x_{ij})
  +\A{i}\A{j}(\A{i}-\A{j})(x+x_i+x_j+x_{ij}) \\
\label{Q2}\tag{\hbox{$Q_2$}}
 & \qquad\qquad
  -\A{i}\A{j}(\A{i}-\A{j})((\A{i})^2-\A{i}\A{j}+(\A{j})^2)=0,\\
\label{Q1}\tag{\hbox{$Q_1$}}
 & \A{i}(x-x_j)(x_i-x_{ij})-\A{j}(x-x_i)(x_j-x_{ij})
 +\d\A{i}\A{j}(\A{i}-\A{j})=0.
\end{align}
\end{theorem}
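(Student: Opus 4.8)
The plan is to combine the single-quadrilateral classification of Theorem \ref{th:qlist} with the necessary consistency conditions of Theorem \ref{th:h}. First I would apply Theorem \ref{th:qlist} to each of the six faces separately: since all of $A,\dots,\bar C$ are of type $Q$, each is Möbius-equivalent to one of (\ref{q4})--(\ref{q1}). The first real task is to show that all six are of the \emph{same} type, with a common value of $k$ in the $Q_4$ case and of $\delta$ in the $Q_3$ and $Q_1$ cases. This follows from part 1) of Theorem \ref{th:h}: the quartic $r$ attached to a vertex of a face is the $\delta$-discriminant of that face's biquadratic on an incident edge (see the diagram (\ref{diagram})), so for two faces meeting along that edge the resulting quartics agree up to a constant factor. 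Since the type of the equation and the modulus $k$ (and in the other families the value of $\delta$) are encoded in the root distribution and in the relative invariants $g_2,g_3$ of $r$, all of which are invariant under such scaling, they are common to the three faces meeting at a vertex, and hence---by connectedness of the edge graph of the cube---to the whole cube.

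Next I would exploit the freedom, noted in the text, of transforming each of the eight vertex variables by an independent Möbius transformation. I use these to bring every quartic $r_i$ into the canonical form of Section \ref{s:rrh}; this is unobstructed precisely because the eight vertices are transformed independently. What then remains at each vertex is the finite residual group fixing its canonical quartic (for $Q_4$ generated by $x\mapsto-x$ and $x\mapsto 1/(kx)$, exactly the symmetries used in the proof of Theorem \ref{th:qlist}). By part 1) of Theorem \ref{th:h} the biquadratic on a given edge is, up to a constant factor, the same whichever of the two adjacent faces it is read from; after the quartics have been normalized, comparison of these biquadratics shows that the biquadratic parameter depends only on the \emph{direction} of the edge, modulo the residual discrete symmetry. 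I would then use the residual finite freedom at the vertices to cancel these discrete ambiguities simultaneously, so that each of the three edge-directions $i$ carries a single parameter $\alpha^{(i)}$. At this point each face equation is exactly the corresponding member of (\ref{q4})--(\ref{q1}) with its two parameters equal to the $\alpha^{(i)}$ of its two spanning directions, which is the structure displayed in (\ref{Q4})--(\ref{Q1}).

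The final step is to enforce condition (\ref{ABCABC}) from part 2) of Theorem \ref{th:h}. Writing the edge biquadratics of the three faces around the vertex $x$ in terms of the common direction parameters, the constant factors relating $B^{0,1}=\lambda_1A^{0,1}$, $C^{0,2}=\mu_2A^{0,2}$, $C^{0,3}=\nu_3B^{0,3}$ are forced by (\ref{ABCABC}) to satisfy $\mu_2+\lambda_1\nu_3=0$. Translating this back through the explicit biquadratics (\ref{1111:1111}), (\ref{112:112}), (\ref{13:13}) and (\ref{4:4}) fixes the remaining discrete freedom---precisely a change of sign or an inversion of one parameter, as anticipated in the introduction to this section---and yields exactly the forms (\ref{Q4})--(\ref{Q1}). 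The argument so far proves only necessity; that these systems are genuinely 3D-consistent is already known from \cite{Adler_Bobenko_Suris_2003}, so the list is complete and non-empty.

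I expect the main obstacle to be the global coherence of the direction-parameter step: invariant matching fixes each parameter only up to the finite symmetry group of the canonical quartic, and one must verify that these pairwise ambiguities can be removed \emph{simultaneously} around the whole cube by the eight independent residual transformations, not merely edge by edge. The $Q_4$ case is the delicate one, since there the parameter lives on the elliptic curve $A^2=r(\alpha)$ and the residual group acts on the pair $(\alpha,A)$. The mechanism that makes the single-parameter-per-direction structure coherent is formula (\ref{Q14}), which (as in the proof of Theorem \ref{th:qlist}) propagates the alignment achieved on one pair of opposite edges of a face to the other pair, so that fixing the parameter on one edge automatically fixes its partners; checking that this propagation is consistent across all six faces simultaneously is the heart of the proof.
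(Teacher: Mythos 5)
Your plan has the same skeleton as the paper's proof---Theorem \ref{th:qlist} applied face-wise, part 1) of Theorem \ref{th:h} to propagate the type and the constants $k^2$, resp.\ $\d$, across shared edges and to match edge parameters up to a finite ambiguity, and condition (\ref{ABCABC}) at the end---but your middle step, as literally stated, is not merely unverified: it is false. You cannot use the residual vertex freedom to make every face \emph{exactly} one of (\ref{q4})--(\ref{q1}) with its two parameters equal to direction parameters $\A{i},\A{j}$. Writing, as in the paper, $\d_{x_2,x_{12}}Q(x,x_1,x_2,x_{12},\a,\tilde\b)=\kappa(\a,\tilde\b)h(x,x_1,\a)$ for the three faces meeting at $x$, condition (\ref{ABCABC}) takes the form
\[
 \frac{\kappa(\a,\tilde\b)\kappa(\b,\tilde\g)\kappa(\g,\tilde\a)}
      {\kappa(\tilde\b,\a)\kappa(\tilde\g,\b)\kappa(\tilde\a,\g)}\,
 m(\a,\tilde\a)\,m(\b,\tilde\b)\,m(\g,\tilde\g)=-1,
 \qquad m(\a,\tilde\a)=\frac{h(x,y,\a)}{h(x,y,\tilde\a)},
\]
and in all four cases the factor $\kappa$ is \emph{symmetric} (e.g.\ $\kappa(\a,\b)=2\sn(\a)\sn(\b)\sn(\a+\b)$ for (\ref{q4})), while $m(\a,\a)=1$; so complete alignment $\tilde\a=\a$, $\tilde\b=\b$, $\tilde\g=\g$ would give $+1$ on the left-hand side. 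Since (\ref{ABCABC}) is a M\"obius-covariant consequence of 3D-consistency, no choice of residual vertex transformations can remove this obstruction: an odd number of flips $\tilde\a=-\a$ (inversions $\tilde\a=1/\a$ in case (\ref{q3})), each contributing $m=-1$, must survive at every vertex. This is visible in the final list itself: (\ref{Q1}) is (\ref{q1}) with $\b=-\A{j}$, and (\ref{Q4}) carries $\sn(\A{i}-\A{j})$ where (\ref{q4}) has $\sn(\a+\b)$. Consequently your last step becomes contradictory rather than conclusive: after full alignment your (correct) relation $\mu_2+\la_1\nu_3=0$ reduces to a ratio of symmetric $\kappa$'s equaling $-1$, which cannot hold, so there is no residual sign or inversion left for it to ``fix''.

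The coherence problem you flag as ``the heart of the proof'' is dissolved, not solved, in the paper: after the face-wise application of Theorem \ref{th:qlist}, no further changes of variables are performed at all. The tilde parameters are kept as unknowns constrained by part 1) of Theorem \ref{th:h} to finite sets ($\tilde\a\in\{\pm\a\}$ for (\ref{q4}), $\tilde\a\in\{\a,1/\a\}$ for (\ref{q3}), etc.), the parity condition above selects the admissible flip patterns (for (\ref{q4}): all three flipped, or exactly one), and these patterns are then identified using \emph{parameter relabeling symmetries of the equations themselves}---(\ref{q4}) is invariant under $(\a,\b)\mapsto(-\a,-\b)$, (\ref{q3}) under simultaneous inversion of $\a,\b$. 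Such relabelings do not touch the fields, hence cause no interference between adjacent faces, and only a routine global sign adjustment over the cube remains. Your appeal to (\ref{Q14}) is correct but operates \emph{within} a single face (that is exactly how Theorem \ref{th:qlist} aligns opposite edges); it does not supply the cross-face mechanism. A minor further slip: $g_2,g_3$ are only relative invariants ($g_k(cr)=c^kg_k(r)$), so the edge-matching data are the root multiplicities and the ratio $g_2^3/g_3^2$---which still suffices for your correct conclusion that the type and the constants $k^2$, $\d$ are uniform over the cube, in agreement with the paper's opening remark that distinct types have distinct singular curves.
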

\begin{proof}
First of all, note that equations of the different types
(\ref{q4})--(\ref{q1}) cannot be consistent with each other since
the corresponding singular curves are different. In particular,
the parameters $k^2$ in the case (\ref{q4}) and $\d$ in the cases
(\ref{q3}), (\ref{q1}) must be the same on each face of the cube.
Moreover, each equation of the list possesses the square symmetry,
that is, it is invariant with respect to the changes
$(x_1\leftrightarrow x_2$, $x_3\leftrightarrow x_4)$ and
$(x_1\leftrightarrow x_3$, $\a\leftrightarrow\b)$.

Therefore, the equations on all faces may differ only by the values
of $\a$ and $\b$. Consider the equations corresponding to three
faces meeting in one vertex, say $x$:
\[
 Q(x,x_1,x_2,x_{12},\a,\tilde\b)=0,\quad
 Q(x,x_2,x_3,x_{23},\b,\tilde\g)=0,\quad
 Q(x,x_3,x_1,x_{13},\g,\tilde\a)=0.
\]
Let
\[
 \d_{x_2,x_{12}}Q(x,x_1,x_2,x_{12},\a,\tilde\b)=\kappa(\a,\tilde\b)h(x,
x_1,\a).
\]
Then, due to the symmetry,
\[
 \d_{x_1,x_{12}}Q(x,x_1,x_2,x_{12},\a,\tilde\b)=\kappa(\tilde\b,\a)h(x,
x_2,\tilde\b)
\]
and, according to the Theorem \ref{th:h}, the parameters must be
related as follows:
\begin{gather*}
 \frac{h(x,x_1,\a)}{h(x,x_1,\tilde\a)}=m(\a,\tilde\a),\quad
 \frac{h(x,x_2,\b)}{h(x,x_2,\tilde\b)}=m(\b,\tilde\b),\quad
 \frac{h(x,x_3,\g)}{h(x,x_3,\tilde\g)}=m(\g,\tilde\g),\\
 \frac{\kappa(\a,\tilde\b)\kappa(\b,\tilde\g)\kappa(\g,\tilde\a)}
      {\kappa(\tilde\b,\a)\kappa(\tilde\g,\b)\kappa(\tilde\a,\g)}
  m(\a,\tilde\a)m(\b,\tilde\b)m(\g,\tilde\g)=-1.
\end{gather*}
In the case (\ref{q4}), a direct computation proves that
$\kappa(\a,\b)=2\sn(\a)\sn(\b)\sn(\a+\b)$ and
\[
 h(x,y,\a)=\frac{1}{2\sn(\a)}(k^2\sn^2(\a)x^2y^2+2\sn'(\a)xy-x^2-y^2+\sn^2(\a)),
\]
therefore $\tilde\a$ may take the values $\pm\a$ and analogously for
$\b,\g$. Obviously, up to a change of the numeration, two cases are
possible:
\[
 \tilde\a=-\a,\quad \tilde\b=-\b,\quad \tilde\g=-\g\qquad\text{or}\qquad
 \tilde\a=\a,\quad \tilde\b=\b,\quad \tilde\g=-\g.
\]
Moreover, this is actually only one case since we can make the
change $(\a,\tilde\b)\to(-\a,-\tilde\b)$ which keeps the equation
$Q(x,x_1,x_2,x_{12},\a,\tilde\b)=0$ invariant, as one can easily see
from (\ref{q4}). It is not difficult to check that we can always
adjust the signs on the whole cube as in the system (\ref{Q4}).

Consider now the case (\ref{q3}). Here
\begin{gather*}
 \kappa(\a,\b)=-\frac{(1-\a^2\b^2)(1-\a^2)(1-\b^2)}{\a^2\b^2},\\
 h(x,y,\a)=\frac{\a}{1-\a^2}(x^2+y^2)-\frac{1+\a^2}{1-\a^2}xy+\frac{(1-
\a^2)\d}{4\a}
\end{gather*}
and $\tilde\a=\a$ or $\tilde\a=1/\a$. Taking into account the
invariance of equation (\ref{q3}) with respect to the simultaneous
inversion of $\a,\b$, we can set, without loss of generality,
\[
 \tilde\a=1/\a,\quad \tilde\b=1/\b,\quad \tilde\g=1/\g
\]
resulting in the system (\ref{Q3}). In the cases (\ref{q2}),
(\ref{q1}) we have respectively
\begin{gather*}
 \kappa(\a,\b)=-4\a\b(\a+\b),\quad
 h(x,y,\a)=\frac{1}{4\a}(x-y)^2-\frac{\a}{2}(x+y)+\frac{\a^3}{4}, \\
 \kappa(\a,\b)=-2\a\b(\a+\b),\quad
 h(x,y,\a)=\frac{1}{2\a}(x-y)^2-\frac{\a\d}{2}
\end{gather*}
and we may set $\tilde\a=-\a$, $\tilde\b=-\b$, $\tilde\g=-\g$
exactly as before. This leads to the systems (\ref{Q2}), (\ref{Q1}).
\end{proof}

The master equation (\ref{Q4}) of the list was first derived in
\cite{A1} and further investigated in \cite{Adler_Suris_2004}. A Lax
representation for (\ref{Q4}) was found in \cite{N} with the help of
the method based on the three-dimensional consistency. Equations
(\ref{Q1}) and (\ref{Q3}$|_{\d=0}$) go back to \cite{QNCL}.
Equations (\ref{Q2}) and (\ref{Q3}$|_{\d=1}$) first appeared
explicitly in \cite{Adler_Bobenko_Suris_2003}.


\section{Examples of type $H$ systems}\label{s:H}

In contrast to the type $Q$ systems, the systems of type $H$ can be
considered as ``degenerate''. Their classification seems to be a
rather tedious task. Presently we cannot suggest any effective
procedure to solve this problem. On the other hand, the examples
given in the Section \ref{s:sing} demonstrate that this class should
not be just neglected as ``pathological''. Indeed, the discrete KdV
example (\ref{dKdV}) suggests that in some cases the degeneracy of
the biquadratics is just an unessential coincidence which does not
spoil the integrability properties of an equation. Here we consider
several more examples of this kind, corresponding to the cases
(\ref{22:22}), (\ref{22:22'}) at $\la\mu=0$, (\ref{4:4}) at $\la=0$,
and (\ref{0:0}) at $\ka\nu-\la\mu=0$ which were excluded in the
previous Section. It turns out that if we apply the same algorithm
in these cases (in spite of the fact that there is no justification
for this) then the list $H$ from our previous paper
\cite{Adler_Bobenko_Suris_2003} will be reproduced:
\begin{align}
\label{H3}\tag{\hbox{$H_3$}}
 & \A{i}(xx_i+x_jx_{ij})-\A{j}(xx_j+x_ix_{ij})+\d((\A{i})^2-(\A{j})^2)=0,
\\
\label{H2}\tag{\hbox{$H_2$}}
 & (x-x_{ij})(x_i-x_j)+(\A{j}-\A{i})(x+x_i+x_j+x_{ij})+(\A{j})^2-
(\A{i})^2=0,\\
\label{H1}\tag{\hbox{$H_1$}}
 & (x-x_{ij})(x_i-x_j)+\A{j}-\A{i}=0.
\end{align}
One may check directly that all statements of the Theorem
\ref{th:h} remain valid for these equations, in spite of the
degeneracy of the biquadratics.

Considering the asymmetric cases (\ref{112:22}), (\ref{13:4}),
(\ref{4:0}) with the different polynomials associated to the
different vertices, one finds that the following variants are
possible, up to the permutations. (There is clearly no distinction
between edges and diagonals when we are dealing with a single
equation.)
\begin{gather*}
 (x_1^2-1,x_2^2,x_3^2,x_4^2),\quad (x_1^2-1,x_2^2-1,x_3^2,x_4^2),\quad
 (x_1^2-1,x_2^2-1,x_3^2-1,x_4^2),\\
 (x_1,1,1,1),\quad (x_1,x_2,1,1),\quad (x_1,x_2,x_3,1), \\
 (1,0,0,0),\quad (1,1,0,0),\quad (1,1,1,0).
\end{gather*}
A direct check shows that the variants of type
 $\begin{pmatrix} r_2(x_4) & r_1(x_3) \\ r_1(x_1) & r_2(x_2) \end{pmatrix}$
are realizable and lead to the following list of 3D-consistent
equations:

\begin{align}
\label{epsH3}\tag{\hbox{$H^\eps_3$}}
  & \a(x_1x_2+x_3x_4)-\b(x_1x_4+x_2x_3)
  +(\a^2-\b^2)\Bigl(\d+\frac{\eps x_2x_4}{\a\b}\Bigr)=0,\\
\nonumber
  & (x_1-x_3)(x_2-x_4)+(\b-\a)(x_1+x_2+x_3+x_4)+\b^2-\a^2\\
\label{epsH2}\tag{\hbox{$H^\eps_2$}}
  & \qquad +\eps(\b-\a)(2x_2+\a+\b)(2x_4+\a+\b)+\eps(\b-\a)^3=0,\\
\label{epsH1}\tag{\hbox{$H^\eps_1$}}
  & (x_1-x_3)(x_2-x_4)+(\b-\a)(1+\eps x_2x_4)=0.
\end{align}
This list can be considered as a deformation of the list $H$ which
corresponds to the case $\eps=0$. However, we use the notation with
cyclic indices rather than shifts since due to the lack of symmetry
the arrangement of the equations on the faces of a cube requires a
more explicit description (see below). Note that in (\ref{epsH1})
the polynomial $1+\eps x_2x_4$ can be replaced by the polynomial
$\kappa x_2 x_4+\mu(x_2 + x_4)+\nu$ with arbitrary coefficients. The
corresponding biquadratic polynomials and their discriminants are
given in the following table (up to multiplication by a suitable
constant, $Q\to\mu(\a,\b)Q$):
\[
 \begin{array}{l|ccc}
  & h(x_1,x_2) & r_1(x_1) & r_2(x_2) \\
  \hline &&&\\[-2mm]
  (\ref{epsH3}) & x_1 x_2+\eps\a^{-1}x_2^2+\d\a & x_1^2-4\d\eps &
x_2^2\\[1mm]
  (\ref{epsH2}) & x_1+x_2+\a+ 2\eps(x_2+\a)^2  & 1-8\eps x_1   &
1 \\[1mm]
  (\ref{epsH1}) & 1+\eps x_2^2             & -4\eps      & 0 \\
 \end{array}
\]

Each of these equations possesses the rhombic symmetry
\[
 Q(x_1,x_2,x_3,x_4,\a,\b)=-Q(x_3,x_2,x_1,x_4,\b,\a)=-Q(x_1,x_4,x_3,
x_2,\b,\a),
\]
but not the square symmetry since the vertices $x_1,x_2$
correspond to polynomials with zeroes of different multiplicities.
The equation is 3D-consistent on the black-white lattice
$i+j+k\pmod2$. That is, each face must carry a copy of the
equation in such way that the parameters on opposite edges
coincide and the vertices $x,x_{12},x_{13},x_{23}$ are of the same
type (here we switch again to the notation where indices denote
shifts, as in Figure~\ref{f.cube}):
\[
 Q(x,x_i,x_{ij},x_j,\A{i},\A{j})=0,\quad
 Q(x_{ik},x_k,x_{jk},x_{123},\A{i},\A{j})=0,\quad \{i,j,k\}=\{1,2,3\}.
\]
Obviously, the equations on opposite faces of the cube do not
coincide, but the equation may nevertheless be extended to the
whole lattice $\Integer^3$. The tetrahedron property is fulfilled.

Finally, we notice that it is also possible to combine equations
with the square and trapezoidal symmetry. Consider equation
(\ref{Q1}) again. Let one pair of opposite faces carry the equations
\[
 Q_1(x,x_1,x_{12},x_2;\A1,\A2)_{\d=1}=0,\quad
 Q_1(x_3,x_{13},x_{123},x_{23};\A1,\A2)_{\d=0}=0,
\]
and let two other pairs carry the equations
\[
 Q(x,x_i,x_{i,3},x_3,\A{i},\eps)=0,\quad
 Q(x_j,x_{ij},x_{123},x_{j,3},\A{i},\eps)=0,\quad \{i,j\}=\{1,2\},
\]
where the polynomial
\[
 Q(x_1,x_2,x_3,x_4,\g,\eps)=(x_1-x_2)(x_3-x_4)+\g(\eps^{-1}-\eps x_3x_4)
\]
actually coincides with (\ref{epsH1}) up to the permutation of
$x_2, x_3$. This awkward structure is 3D-consistent and,
surprisingly, satisfies the tetrahedron property. It can be also
extended to the lattice $\Integer^3$.


\section{Concluding remarks}

Non-commutative analogues of some of the equations in the Q-list are
known. In particular, a quantum version of (\ref{Q1}$|_{\d=0}$)
appeared in \cite{Volkov}. In \cite{Bobenko_Suris_noncom} the
consistency approach was extended to the non-commutative context,
when the fields take values in an arbitrary associative algebra. The
definition of three-dimensional consistency remains the same in this
case, however, the assumption on the affine-linearity is replaced by
the requirement that the equation can be brought to the linear form
$px=q$ with respect to any variable $x$. These two properties are
not equivalent in the noncommutative case, as is seen from the
following examples. The first one was found in
\cite{Bobenko_Suris_noncom} and the other two by V.V.~Sokolov and
V.~Adler (unpublished):
\begin{align*}
\tag{\hbox{$\widehat Q_1|_{\d=0}$}}
  & \a^{(1)}(x-x_2)(x_2-x_{12})^{-1}=\a^{(2)}(x-x_1)(x_1-x_{12})^{-
1}, \\
  & \a^{(1)}(x_1-x_{12}+\a^{(2)})(x-x_1-\a^{(1)})^{-1} \\
\tag{\hbox{$\widehat Q_1|_{\d=1}$}}
  &\qquad\qquad =\a^{(2)}(x_2-x_{12}+\a^{(2)})(x-x_2-\a^{(2)})^{-1}, \\
  & (1-(\a^{(1)})^2)(x_1-\a^{(2)}x_{12})(\a^{(1)}x-x_1)^{-1}\\
\tag{\hbox{$\widehat Q_3|_{\d=0}$}}
  &\qquad\qquad =(1-(\a^{(2)})^2)(x_2-\a^{(1)}x_{12})(\a^{(2)}x-x_2)^{-
1}.
\end{align*}
The existence of non-commutative analogs of (\ref{Q2}),
(\ref{Q3}$|_{\d=1}$) and (\ref{Q4}) remains an open question.
Although the analysis of the singular solutions may still be useful
as a general principle, our technique is based on the algebraic
properties of affine-linear and biquadratic polynomials and is
therefore not applicable to this problem.

More general quantum systems with consistency property were found
recently in \cite{BazhSerg,BaMaSe}.


\end{document}